\newtheorem{theorem}{Theorem}
\newtheorem{reduce}{Rule}
\newtheorem{lemma}{Lemma}
\newtheorem{proposition}{Proposition}
\newtheorem{definition}{Definition}
\newcommand{\cO}{{O}}
\newcommand{\cOs}{{O^{*}}}
\renewcommand{\emptyset}{\varnothing} 
\DeclareMathOperator*{\argmin}{arg\,min}
\newcommand{\defproblemu}[3]{
  \vspace{0.5mm}
  \noindent\fbox{
  \begin{minipage}{.95\textwidth}
        #1 \\
      {\bf{Instance:}} #2  \\
      {\bf{Task:}} #3
    \end{minipage}
  }
  \vspace{0.5mm}
}
\newcommand{\hcd}{\textsc{Highly Connected Deletion \xspace}}
\newcommand{\phcd}{\(p\)-\textsc{Highly Connected Deletion \xspace}}
\newcommand{\hcs}{\textsc{Isolated Highly Connected Subgraph \xspace}}
\newcommand{\fhcs}{$f$-\textsc{Isolated Highly Connected Subgraph \xspace}}
\newcommand{\seed}{\textsc{Seeded Highly Connected Edge Deletion \xspace}}
\begin{document}
\title{Parameterized Algorithms for Partitioning Graphs into Highly Connected Clusters}
\author{Ivan Bliznets\thanks{St. Petersburg Department of V.A. Steklov Institute of Mathematics of the Russian Academy of Sciences, Russia,\texttt{iabliznets@gmail.com}.}
	 \and Nikolai Karpov\thanks{St. Petersburg Department of V.A. Steklov Institute of Mathematics of the Russian Academy of Sciences, Russia,\texttt{kimaska@gmail.com}.}}
\maketitle

\begin{abstract}
Clustering is a well-known and important problem with numerous applications. The graph-based model is one of the typical cluster models. In the graph model, clusters are generally defined as cliques. However, such an approach might be too restrictive as in some applications, not all objects from the same cluster must be connected. That is why different types of cliques relaxations often considered as clusters.

  In our work, we consider a problem of partitioning graph into clusters and a problem of isolating cluster of a special type where by cluster we mean highly connected subgraph. Initially, such clusterization was proposed by Hartuv and Shamir. And their  HCS clustering algorithm was extensively applied in practice. It was used to cluster cDNA fingerprints, to find complexes in protein-protein interaction data, to group protein sequences hierarchically into superfamily and family clusters, to find families of regulatory RNA structures. The HCS algorithm partitions graph in highly connected subgraphs. However, it is achieved by deletion of not necessarily the minimum  number of edges. In our work, we try to minimize the number of edge deletions. We consider problems from the parameterized point of view where the main parameter is a number of allowed edge deletions. The presented algorithms significantly improve previous known running times for the \hcd (improved from  $\cOs\left(81^k\right)$ to  $\cOs\left(3^k\right)$), \hcs(from $\cOs(4^k)$ to $\cOs\left(k^{\cO\left(k^{\sfrac{2}{3}}\right)}\right)$ ), \seed (from $\cOs\left(16^{k^{\sfrac{3}{4}}}\right)$ to $\cOs\left(k^{\sqrt{k}}\right)$) problems. Furthermore, we present a subexponential algorithm for \hcd problem if the number of clusters is bounded. Overall our work contains three subexponential algorithms which is unusual as very recently there were known very few problems admitting subexponential algorithms.

\end{abstract}

\section{Introduction}

Clustering is a problem of grouping objects such that objects in one group are more similar to each other than to objects in other groups. Clustering has numerous applications, including: machine learning,  pattern recognition, image analysis, information retrieval, bioinformatics, data compression, and computer graphics. Graph-based model is one of the typical cluster models.  In a graph-based model most commonly cluster is defined as a clique. However,  in many applications, such definition of a cluster is too restrictive~\cite{{pattillo2013clique}}. Moreover, clique model generally leads to computationally hard problems. For example clique problem is $W[1]-hard$ while $s$-club problem,
with $s\geq2$, is fixed-parameter tractable with respect to the parameters solution size and $s$~\cite{schafer2009exact}. Because of the two mentioned reasons researchers consider different clique relaxation models~\cite{pattillo2013clique, shahinpour2013distance}.  We mention just some of the  possible relaxations: $s$-club(the diameter is less than of equal to $s$), $s$-plex (the smallest degree is at least $|G|-s$), $s$-defective clique (missing $s$ edges to complete graph), $\gamma$-quasi-clique ($|E|/{ {|V|}\choose{2}} \geq \gamma$), highly connected graphs (smallest degree bigger than $|G|/2$) and others. With different degree of details all these relaxations were studied:  $s$-club\cite{schafer2009exact, shahinpour2013distance}, $s$-plex \cite{moser2009algorithms, balasundaram2011clique}, $s$-defective clique~\cite{yu2006predicting, guo2011editing}, $\gamma$-quasi-clique~\cite{PattilloYB13, pattillo2013maximum}, highly connected graphs \cite{HuffnerKS15, HuffnerKLN14, HartuvS00}. 

In this work, we study the clustering problem based on highly connected components model. A graph is \emph{highly connected} if the edge connectivity of a graph(the minimum number of edges whose deletion results in a disconnected graph) is bigger than $\frac{n}{2}$ where $n$ is the number of vertices in a graph. An equivalent characterization is for each vertex has degree bigger than $\frac{n}{2}$, it was proved in \cite{graphn2}. One of the reasons for this choice is a huge success in applications of the  Highly Connected Subgraphs(HCS) clustering algorithm proposed by Hartuv and Shamir and the second reason is the lack of research for this model compared with the standard clique model. HCS algorithm was used~\cite{HuffnerKLN14} to cluster cDNA fingerprints~\cite{hartuv2000algorithm}, to find complexes in protein-protein interaction data~\cite{hayes2013graphlet}, to group protein sequences hierarchically into superfamily and family clusters~\cite{krause2005large}, to find families of regulatory RNA structures~\cite{parker2011new}.

H{\"{u}}ffner et al.~\cite{HuffnerKLN14} noted that while Hartuv and Shamir’s algorithm partitions a graph into highly connected components, it does not delete the minimum number of edges required for such partitioning. That is why they initiated study of the following problem 


\defproblemu{\hcd}{Graph $G = (V, E)$.}{
	Find edge subset  $E' \subseteq E$ of the minimum size  such that 
	each connected component of $G' = (V, E \setminus E')$ is highly connected.}
	

For this problem, H\"{u}ffner et al.~\cite{HuffnerKLN14} proposed an algorithm which is based on the dynamic programming technique 
with the running time bounded by \(\cOs(3^n)\) where $n$ is the number of vertices. For parameterized version of the problem they proposed an algorithm with the running time \(\cOs(81^k)\) where \(k\) is an upper bound on the size of \(E'\). 
Additionally, they proved that the problem admits a kernel with the size \(\cO(k^{1.5})\). Moreover, 
they proved conditional lower bound on the running time of algorithms for \hcd, in particular, the problem cannot be solved in time
\(2^{o(k)}{\cdot}n^{O(1)}, 2^{o(n)}{\cdot}n^{O(1)}\,,\) or \(2^{o(m)}{\cdot}n^{O(1)}\) unless 
the exponential-time hypothesis (ETH) fails. 

Moreover, in another work H\"{u}ffner et al.~\cite{HuffnerKS15} studied a parameterized complexity of related problem of finding highly connected components in a graph.

\defproblemu{\hcs}{Graph $G = (V, E)$, integer $k$, integer $s$.}{
	Is there a set of vertices $S$ such that $|S|=s$, $G[S]$ is highly connected graph 
	and $|E(S, V \setminus S)| \leq k$.
}

\defproblemu{\seed}{
	Graph $G = (V, E)$, subset  $S \subseteq V$, integer $a$, integer $k$.
}{
Is there a subset of edges $E' \subseteq E$ of size at most $k$  such that $G - E'$ contains only isolated vertices and one highly connected component $C$ with $S \subseteq V(C)$ and $|V(C)| = |S| + a$.
}

They proposed algorithms with the running time \( \cOs(4^k) \) and \( \cOs(16^{k^{3/4}}) \) respectively.

\textbf{Our results:}
We propose algorithms which significantly improve previous upper bounds. Running times of algorithms may be found in a Table~\ref{tab:res}. We would like to note that three of the algorithms have subexponential running time which is not common. Until very recently there were very few  problems admitting subexponential running time. To our mind in algorithm for \hcs problem we have an unusual branching procedure as in one branch parameter is not decreasing. However, the value of subsequent decrementation of parameter in this branch is increasing which leads to subexponential running time.  We find the fact interesting as we have not met such behavior of branching procedures before. Presented analysis for this case might be useful in further development of subexponential algorithms.   

\begin{table}
\begin{center}
\begin{tabular}{|c|c|c|}
\hline
Problem & Previous result & Our result \\ \hline
\hcd (exact) & $\cOs\left(3^n\right)$ & $\cOs\left(2^n\right)$ \\ \hline
\hcd (parameterized) & $\cOs\left(81^k\right)$ & $\cOs\left(3^k\right)$ \\ \hline
\phcd & - & $\cOs\left(2^{\cO\left(\sqrt{pk}\right)}\right)$ \\  \hline
\hcs & $\cOs(4^k)$ & $\cOs\left(k^{\cO\left(k^{\sfrac{2}{3}}\right)}\right)$ \\ \hline
\seed & $\cOs\left(16^{k^{\sfrac{3}{4}}}\right)$ & $\cOs\left(k^{\sqrt{k}}\right)$ \\ \hline

\end{tabular}
\end{center}
\caption{Results}\label{tab:res}
\end{table}

\section{Algorithms for partitioning} 
\subsection{\hcd}
In this section we present an algorithm for \hcd problem. Our algorithm is based on the fast 
subset convolution. Let $f, g: 2^X \to \{0,1,\dots M\}$ be two functions and $|X|=n$.
Bj{\"o}rklund et al. in \cite{BjorklundHKK07} proved that function  $f \ast g: 2^X \to \{0,\dotsc, 2M\}$, where 
 $(f \ast g)(S)  = \min\limits_{T \subseteq S}\left(f(T) + g(S \setminus T)\right)$, can be computed on all subsets $S\subseteq X$ in time  $\cO(2^n\mathrm{poly}(n,M))$.

\begin{theorem}\label{exact:hcd}
There is a $\cOs(2^n)$ time  algorithm for \hcd problem.
\end{theorem}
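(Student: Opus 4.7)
The plan is to reduce \hcd to an iterated min-plus subset convolution and apply the Bj{\"o}rklund et al.\ theorem cited above. First I would observe that in any optimum solution $E'$, the connected components $V_1,\dots,V_r$ of $G-E'$ satisfy that each $G[V_i]$ is already highly connected: every vertex of $V_i$ has degree $>|V_i|/2$ in the surviving subgraph, hence at least that many neighbours in $G[V_i]$, and this is the degree characterisation of highly connected graphs quoted in the introduction. Furthermore, a simple exchange argument rules out deleting any edge internal to a $V_i$ in the optimum: restoring such an edge leaves the partition into components unchanged and can only increase vertex degrees, so highly-connectedness is preserved and the solution gets strictly smaller. Consequently, \hcd reduces to choosing a partition $V=V_1\sqcup\cdots\sqcup V_r$ so that every $G[V_i]$ is highly connected and the number of cross-edges $\sum_{i<j}|E(V_i,V_j)|$ is minimum, equivalently so that $\sum_i|E(G[V_i])|$ is maximum.

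Next I would encode this as a min-plus set-partition problem. Define $f:2^V\to\mathbb{Z}\cup\{+\infty\}$ by $f(S)=-|E(G[S])|$ if $S$ is nonempty and $G[S]$ is highly connected (with the convention that a single vertex counts), and $f(S)=+\infty$ otherwise; in particular $f(\emptyset)=+\infty$, and checking whether $G[S]$ is highly connected takes polynomial time per subset by inspecting degrees. Set $g_1=f$ and iteratively $g_k=f\ast g_{k-1}$ for $k=2,\dots,n$, where $\ast$ is the min-plus subset convolution from the preamble. A routine induction on $k$ shows that
\[
  g_k(S)\;=\;\min\Bigl\{\,\textstyle\sum_{i=1}^{k}f(V_i)\;:\;S=V_1\sqcup\cdots\sqcup V_k,\ V_i\neq\emptyset\Bigr\},
\]
since the value $f(\emptyset)=+\infty$ suppresses any $T=\emptyset$ that would otherwise appear in $(f\ast g_{k-1})(S)$. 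Hence the minimum number of edges to delete equals $|E(G)|+\min_{1\le k\le n}g_k(V)$.

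For the running time, every entry of $f$ and of every $g_k$ is an integer in $[-n\cdot|E(G)|,0]$ together with a single large sentinel representing $+\infty$, so the $M$ parameter of the Bj{\"o}rklund et al.\ theorem is $\cO(n^3)$ throughout. Each convolution then costs $\cO(2^n\mathrm{poly}(n,M))=\cOs(2^n)$, and since we need at most $n-1$ of them the total running time is still $\cOs(2^n)$. The only conceptual subtlety in writing this up carefully is the boundary handling, i.e.\ making sure the partition is into nonempty highly connected parts and that internal-edge deletions do not occur at the optimum; both issues are settled by the two observations in the first paragraph plus the $+\infty$ convention, so no real obstacle remains.
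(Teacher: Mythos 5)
The proposal is correct and takes essentially the same route as the paper: an iterated min-sum subset convolution over the function that is finite exactly on subsets inducing highly connected subgraphs, minimized over the number of parts, with only a cosmetic difference in the objective (you use negated internal edge counts, the paper uses the cut sizes $|E(S,V\setminus S)|$, which count each deleted edge twice). The only points to tidy are that the cited convolution theorem is stated for values in $\{0,\dots,M\}$, so your negative entries need a routine affine shift per fixed number of parts (or just the paper's nonnegative encoding with $\infty$ replaced by $2m+1$), and that the actual edge set $E'$ is recovered by standard backtracking through the tables $g_k$, as the paper spells out.
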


\begin{proof}
	Let  define function $f$ in the following way 
	\begin{equation*}
		f(S)=
		\left\{
		\begin{array}{ll}
			|E(S, V \setminus S)| & \mbox{if }G[S]  \mbox{ is highly connected} \\
			\infty & \mbox{otherwise}
		\end{array}
		\right.
	\end{equation*}
	Consider function $f^{\ast k}(V)= \underbrace{f*\dots*f}_\text{k times}$.
	Note that $f^{\ast k}(V) = \min\limits_{S_1 \sqcup \dotsb \sqcup S_k = V}\left(f(S_1) + \dotsb + f(S_k)\right)$. Hence, to solve the problem it is enough to find minimum of  $f^{\ast k}(V)$ over all $1\leq k \leq n$.
	Note that if $f^{\ast k}(V)=\infty$ then it is not possible to partition $V$ into $k$ highly connected components. So if the minimum value of $f^{\ast k}(V)$  is $\infty$ then there is no partitioning of $G$ into highly connected components. 
	
	Our algorithm contains the following steps.
	\begin{enumerate}
		\item Compute $f$, i.e. compute value $f(S)$ for all $S \subset V$. It takes $\cO(2^n(n+m))$ time.
		\item Using Bj{\"o}rklund et al.\cite{BjorklundHKK07} algorithm iteratively compute $f^{\ast i}$ for all $1 \leq i \leq n$. 
		\item  Find $k$ such that $f^{\ast k}(V)$ is minimal. 
	\end{enumerate}

	After we perform above steps  we will know values of functions $f^{\ast i}$ on each subset $S\subseteq X$.
	Let $S_1 \sqcup S_2 \sqcup \dots \sqcup S_k$ be an optimum partitioning of $X$ into highly connected components.
	Knowing values of function   $f^{\ast {k-1}}$  and $f$ it is straightforward to restore $S_k$ in time $2^n$.  
	Moreover, knowing $f^{\ast {k-1}}, S_k$ we can find value of $S_{k-1}$. Proceeding this way we obtain the optimum partitioning. As  $ k \leq n$, we spent at most $\cO(n2^{n})$ time to find all $S_i$.
	
	It is left to show how to compute all $f^{\ast{i}}$ within \(\cOs(2^{n})\) time. The only obstacle why we cannot straightforwardly  apply Bj{\"o}rklund's algorithm is that $f$ sometimes takes infinite value. It is easy to fix the problem by replacing infinity value with $2m + 1$. We know that  each convolution require  
	\( \cO(2^{n}{\mathrm{poly}}(n, M)) \) time and above we show that we can put $M$ to be equal $2m+1$. As we need to perform 
	$n$  subset convolutions. So, the running time of second step is $\cOs(2^n)$. Hence, the overall running time is $\cOs(2^n)$.
\end{proof}

Now we consider parameterized version of \hcd problem (one is asked whether it is possible to delete at most $k$ edges  and get a vertex disjoint union of highly connected subgraphs).

\begin{theorem}\label{parameterized:hcd}
There is an algorithm for \hcd problem with running time  $\cOs(3^k)$.
\end{theorem}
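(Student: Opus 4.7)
The plan is to design a branching algorithm with branching vector $(1,1,1)$, yielding the recurrence $T(k) \leq 3\,T(k-1) + \mathrm{poly}(n)$ and hence the $\cOs(3^k)$ bound. At each recursive call on an instance $(G,k)$, I would first test in polynomial time whether every connected component of $G$ is highly connected by computing a minimum edge cut per component (e.g.\ via Stoer--Wagner). If so, the empty deletion set witnesses the answer. Otherwise, fix a non-highly-connected connected component $C$ and compute a minimum edge cut $(A,B)$ of $G[C]$; its size $\lambda = |E(A,B)|$ satisfies $\lambda \leq \lfloor |C|/2\rfloor$ by non-highly-connectedness.

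The central structural observation I would exploit is that edge deletions can only decrease vertex degrees, so $G[C]$ cannot be rendered highly connected by deletions alone. Hence in any feasible solution $E'$ the vertex set $V(C)$ must be partitioned into at least two clusters, and the edges of $E'$ inside $C$ contain a cut of size at least $\lambda$ (since $(A,B)$ is minimum). This is the fact that any ternary branching will ultimately charge against.

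From here, I would seek three specific edges $e_1, e_2, e_3$ of $G$ such that every feasible $E'$ contains at least one $e_i$, and branch into three recursive calls, each reducing $k$ by one. The candidates depend on $\lambda$: for $\lambda=1$ the unique cut edge is a bridge and the branching step degenerates to a deterministic reduction rule; for $\lambda \geq 2$ the three edges would be drawn from $E(A,B)$ together with edges forced by a minimum-degree vertex $v$ of $C$ (whose cluster in any feasible solution must have size at most $2\deg_{G[C]}(v)-1$, because $v$'s final cluster requires $\deg(v) > |C_v|/2$). The key lemma would state that every feasible $E'$ must delete at least one of the three chosen edges, proved via a cut-exchange argument using minimality of $(A,B)$ and the degree bound on $v$.

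The main obstacle is soundness of the three-way branch: different feasible solutions could cut along completely different minimum cuts of $G[C]$, so naively choosing any three edges of $E(A,B)$ is not enough. To overcome this, the three candidate edges must be forced by a local obstruction rather than by one specific global cut; a natural candidate is an induced substructure around the minimum-degree vertex $v$ that forces at least one of three incident (or near-incident) edges into $E'$ in any solution. Careful case analysis on the value of $\lambda$ and on whether $v$ is a singleton cluster in the solution will be required, and one must verify that the per-node polynomial work for computing minimum cuts and identifying the obstruction does not asymptotically dominate the $\cOs(3^k)$ search-tree bound.
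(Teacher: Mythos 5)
There is a genuine gap at the exact point you flag yourself: the existence of three edges $e_1,e_2,e_3$ such that \emph{every} feasible solution deletes one of them is never established, and the candidates you propose do not have this property. A solution $E'$ need not touch any prescribed minimum cut $(A,B)$ of a non-highly-connected component (it only must contain \emph{some} bipartition cut of that component, which can be edge-disjoint from $(A,B)$ when $\lambda\ge 2$), and the bound $|C_v|\le 2\deg(v)-1$ for a minimum-degree vertex $v$ constrains the \emph{size} of $v$'s cluster but forces no particular incident edge into $E'$, since the deleted edges at $v$ can be any subset of its up to $n-1$ incident edges. So the ``key lemma'' is not a routine cut-exchange argument but the entire difficulty, and there is no fallback in your plan when it fails; in particular you never exploit any bound of the form $n=\cO(k)$, so even an exhaustive finish at the leaves is unavailable.

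The paper's proof shows why this matters: its only $(1,1,1)$ branching is on the three edges of a shortest path between two vertices at distance exactly $3$, which is sound because every highly connected graph has diameter at most $2$ (Lemma~\ref{lemma-dist}), so such a pair must be separated and one path edge must be deleted. Once the diameter is at most $2$ this kind of forced-edge branching is no longer available, and the paper does \emph{not} continue branching; instead it combines the reduction rules of H\"uffner et al.\ (giving $|U|\le 2k$ and hence $n\le 4k$), the case distinction between partitions with an unaffected vertex and those where all vertices are affected, and the $\cOs(2^n)$ subset-convolution algorithm of Theorem~\ref{exact:hcd}, balancing the enumeration of a partial cluster against the exact algorithm via the binomial estimate of Lemma~\ref{bound} to get $\cOs(2^{1.5k})$ leaves under the $3^k$ branching. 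Your proposal would need either a proof of a forced-three-edge lemma valid for diameter-$2$ instances (which the state of the art, including the previous $\cOs(81^k)$ branching, suggests is not known) or an alternative machinery for the residual instances of the kind the paper builds; as written, it cannot be completed into a proof of the $\cOs(3^k)$ bound.
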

\begin{proof}

Before we proceed with the proof of the theorem we list several simplification rules and 
lemmas proved by  H{\"{u}}ffner et al. in \cite{HuffnerKLN14}.

\begin{reduce}
If $G$ contains a connected component  $C$ which is highly connected then replace original instance with 
instance $(G[V \setminus V(C)], k)$.
\end{reduce}

\begin{lemma}\label{neigh}
Let $G$ be a highly connected graph and  $u, v \in V(G) $ be two different vertices from $V(G)$. 
If $uv \in E$, then $|N(u) \cap N(v)| \geq 1 $. If $uv \not\in E$ then $|N(u) \cap N(v)| \geq 3$.
\end{lemma}

\begin{reduce}
If  $u, v \in E$ and $N(u) \cap N(v) =\emptyset$ then delete edge $uv$ and decrease parameter $k$ by 1. 
The obtained instance is $((V, E \setminus \{uv\}), k - 1)$.
\end{reduce}

\begin{definition}
Let us call vertices $u, v$ \emph{$k$-connected} if any cut separating these two vertices has size bigger 
than $k$.
\end{definition}
\begin{reduce}\label{inseparable}
Let  $S$ be an inclusion maximal set of pairwise $k$-connected vertices and $|S|>2k$. If the induced graph 
$G[S]$ is not highly connected then our instance is a  NO-instance(it is not possible to delete $k$ edges and obtain  vertex disjoint union of higly connected subgraphs). Otherwise, we replace original instance with an instance  $(G[V \setminus S], k - |E(S, V \setminus S)|)$.
\end{reduce}
\begin{lemma}\label{lemma-dist}
If $G$ is highly connected then $diam(G)\leq 2$.
\end{lemma}

It was shown in \cite{HuffnerKLN14}  that all of the above rules are applicable in polynomial time.

Without loss of generality assume that $G$ is connected. Otherwise, we  consider 
several independent problems. One problem for each connected component.   For each connected component
we find minimum number of edges that we have to delete in order to partition this component into 
highly connected subgraphs. Note that in order to find a minimum number for each subproblem 
we simply consider all possible values of parameter starting from $0$ to $k$.

From Lemma~\ref{lemma-dist} follows that if  $dist(u,v)$ (distance between two vertices $u,v$) is bigger than $2$ then in optimal partitioning $u$ and $v$ belong to different connected components. 
Hence, if $dist(u,v)\geq 3$ then at least one edge from the shortest path between $u$ and $v$ belongs to $E'$.
If $diam(G)>2$ then it is possible to find two vertices $u,v$ such that $dist(u,v)=3$. 
So given the shortest path $u, x, y, v$ we can branch to three instances  $(G\setminus ux, k - 1)$,
$(G\setminus xy, k - 1)$, $(G\setminus yv, k - 1)$. We apply such branching exhaustively. Finally, we obtain instance with a graph $G'$ of diameter 2. 

Now, for our algorithm it is enough to consider a case when graph $G$ has the following properties:
(i) $diam(G) \leq 2$; (ii) there are no subsets $S$ of pairwise $k$-connected vertices with $|S|>2k$; (iii) $G$ is not highly connected. 


From now on we assume that $G$ has above mentioned properties.
Suppose $C_1\sqcup C_2 \sqcup \dots \sqcup C_\ell$ is an optimum partitioning of $G$ into 
highly connected graphs and  $E'$ is a subset of removed edges. 
We call vertex \emph{affected} if it is incident with an edge from $E'$. 
Otherwise, it is \emph{unaffected}.
Denote by $U$ the set of all unaffected vertices and by $T$ the set  of all affected vertices. By $C(v)$ we denote a cluster $C_i$ for which $v \in C_i$.  Note that for affected vertex $u$ there is vertex $v$ such that $uv \in E(G)$ and $v \notin C(u)$.

\begin{lemma}\label{alone}
Let $G$ be a graph with diameter $2$ then for any optimum partitioning 
$C_1\sqcup C_2 \sqcup \dots \sqcup C_\ell$ of $G$ into highly connected graphs  there is an $i$ such that 
$U$ is contained in $C_i$.  
\end{lemma}
\begin{proof}
Assume that there are two unaffected vertices $u, v  \in U$ and $C(v)\neq C(u)$. 
Note that any path between $u$ and $v$ must contain an edge from $E'$ and two different edges contained in 
$C(u), C(v)$ and incident to $u$ and $v$ correspondingly. 
So, the shortest path between $u$ and $v$ contains at least three edges which 
contradict our assumption that $diam(G)\leq 2$. Hence, there is an $i$ such that $U \subseteq C_i$. 
\end{proof}

\begin{lemma}\label{cluster}
Let $G$ be a graph with diameter $2$ and optimum partitioning $C_1 \sqcup C_2\sqcup \dots \sqcup C_\ell$ into 
highly connected graphs. If $U$ is not empty then $|E'| \geq n - |C_i|$ 
where  $U \subseteq C_i$.
\end{lemma}
\begin{proof}
Consider an arbitrary unaffected vertex $u$. For any $v \in V$ we have $dist(v,u)\leq 2$. 
Hence, for any $v\notin C(u)$ there is an edge connecting component $C(u)$ with vertex $v$ as 
otherwise we have $dist(u,v)>2$. So we have  $|E'|\geq n - |C(u)|$.
\end{proof}

For any \texttt{YES}-instance we have  $k \geq |E'| \geq \frac{|T|}{2}$, $n = |T| + |U|$, and $|U| \leq 2k$.The inequality $|U| \leq 2k$ follows from the simplification Rule~\ref{inseparable} and Lemma~\ref{alone}. As otherwise highly connected component which contains $U$ is bigger than $2k$ and 
hence simplification Rule~\ref{inseparable} can be applied which leads to contradiction. So, it means that $n = |T| + |U| \leq 4k$.

Below we present two algorithms. One of these algorithms solves the problem under assumption that optimum partitioning contains at least one unaffected vertex, the other one solves the problem under assumption that 
all vertices are affected in optimum partitioning. In order to estimate running time of the algorithms we use the following lemma.

\begin{lemma}\cite{FominKPPV14}\label{bound}
For any non-negative integer $a$, $b$ we have
$
{a + b \choose b} \leq 2^{2\sqrt{ab}}.
$
\end{lemma}

At first, consider a case when there is at least one unaffected vertex in optimum partitioning.
\begin{lemma}\label{alg32}
 Let  $G$ be a connected graph with diameter at most $2$. 
 If there is an optimum partitioning  $C_1 \sqcup C_2\sqcup\dots \sqcup C_\ell$ of $G$  
 into highly connected graphs such that set of unaffected vertices is not empty then 
 \hcd can be solved in  $\cOs(2^{\frac{3k}{2}})$ time.
\end{lemma}

\begin{proof}
Let us fix some unaffected vertex $u$ (in algorithm we simply brute-force all $n$ possible values for unaffected vertex $u$). By Lemma~\ref{cluster} highly connected  graph 
$C(u)$ contains at least $n - k$ vertices. 
As $u$ is unaffected then $N(u) \subset C(u)$ and $|N(u)| > \frac{|C(u)|}{2}$. 
Consider set $V \setminus N[u]$. And partition it into two subsets 
$W_{1,2} \sqcup W_{\geq 3}$, where $W_{1,2} = \{v | 1 \leq |N(u) \cap N(v)| \leq 2\}$, and
$W_{\geq 3} = \{v | 3 \leq |N(u) \cap N(v)| \}$. From  lemma~\ref{neigh} follows that $W_{1,2} \cap C(u) = \emptyset$.  Note that knowing set $C_{part}=C(u)\cap W_{\geq 3}$ 
we can find set $C(u)=C_{part}\cup N[u]$ and after this simply run algorithm from Theorem~\ref{exact:hcd} on set $V(G)\setminus C(u)$. We implement this approach.

We know that $N[u]\sqcup C_{part}=C(u)$ and $C(u) \leq 2k$. As $|C_{part}|\leq \frac{C(u)}{2}$ it follows that $|C_{part}|\leq k$. Brute-force over all possible values of  $s=|C_{part}|$. Having fixed value of $s$ we enumerate all subsets of $W_{\geq3}$ of size $s$. All such subsets are potential candidates for a $C_{part}$ role. It is possible to enumerate candidates with polynomial delay i.e. in  $O^*({{ |W_{\geq3}| } \choose {|C_{part}|}})$ time.

For each listed candidate we run algorithm from Theorem~\ref{exact:hcd}. Let $R=W_{\geq 3} \setminus C_{part}$. Hence, the overall running time for a fixed $|C_{part}|$ is bounded by 
$\cOs(2^{|R \cup W_{1,2}|})  {{ |W_{\geq3}| } \choose {|C_{part}|}} = \cOs(2^{|R \cup W_{1,2}|})  {{ |C_{part}|+|R| } \choose {|C_{part}|}}$. 
By Lemma~\ref{bound}  we have:
  
\(
	\cOs(2^{|R \cup W_{1,2}|})  {{ |C_{part}|+|R| } \choose {|C_{part}|}} = 
	\cOs(2^{{2 \sqrt{|C_{part}||R|}} + |R| + |W_{1,2}|}).
\)
  
We know that $|C_{part}| \leq k$, $3|R| + |W_{1,2}| \leq k$, hence
\( 
	\cOs(2^{{2 \sqrt{|C_{part}||R|}} + |R| + |W_{1,2}|}) \leq \cOs(2^{2 \sqrt{k|R|} - 2|R| + k}).
\)
The function $g(t)=2 \sqrt{kt} - 2t + k$ attains it maximum when $t=\frac{k}{4}$. 
So the running time in the worst case is $\cOs(2^{1.5k})$.
\end{proof}
The following Algorithm~\ref{alg:unaffected} illustrates the proof of last Lemma.
\begin{algorithm}\caption{}\label{alg:unaffected}
	\begin{algorithmic}
		\Function{$UNAFFECTED$}{$G = (V, E), k$}
		\For{$u \in V$}
		\State $W_{1,2} = \{v | v \in V \setminus N[u], |N(v) \cap N(u)| \leq 2\}$
		\State $W_{\geq 3} = \{v | v \in V \setminus N[s], |N(v) \cap N(u)| \geq 3\}$
		\For{$s : s < |N(u)| \And s \leq k \And 3(|W_{\geq 3}| - s)+|W_{1,2}| \leq k$}
		\For{$C_{part} \subseteq W_{\geq 3} \And |C_{part}| = s$}
		\State $Q = N[u] \cup C_{part}$
		\If{$G[Q]$ is highly connected}
		\If{$EXACT(G[V \setminus Q], k - |E(Q, V \setminus Q)|)$}
		\State \Return YES
		\EndIf
		\EndIf
		\EndFor
		\EndFor
		\EndFor
		\State \Return NO
		\EndFunction
	\end{algorithmic}
\end{algorithm}

It is left to construct an algorithm for a case in which all vertices are affected in optimum partitioning.
 First of all note that 
if  $n \leq 1.57k \leq k\log_{2}{3}$ we can simply run Algorithm~\ref{exact:hcd} and it finds an answer in 
$\cOs(2^n) = \cOs(3^k)$ time.  Taking into account that all vertices are affected we have that 
$n\leq 2k$. So we may assume that  $1.57 k \leq n \leq 2k$.

\begin{lemma}\label{big}
Let $G$ be a graph with diameter $2$ and $|V(G)| \geq 1.57{k}$. Moreover, $(G, k)$  \hcd problem  admits correct partitioning  into highly connected components $C_1\sqcup C_2\sqcup \dots \sqcup C_\ell $  such that all vertices are affected in this partitioning. Then there are two highly connected components 
$C_i, C_j$  such that $|C_i| + |C_j| \geq {n - k}$.
\end{lemma}

\begin{proof}
Let $E'$ be set of deleted edges for partitioning  $C_1\sqcup C_2\sqcup \dots \sqcup C_\ell $. From $n \geq 1.57 k$ follows that in graph $(V(G), E')$ 
there is a vertex $s$ of degree $1$,  let $st \in  E'$ be the edge. 
We prove that $C(s), C(t)$ are desired highly connected components. 
As $diam(G)\leq 2$ then for any vertex $v \in V(G)\setminus C(s) \setminus C(t)$ 
there is path of length at most $2$ from $s$ to $v$. 
Hence, any vertex  $v \in V(G)\setminus C(s) \setminus C(t)$  should be connected with $C(s) \cup C(t)$ in graph $G$.
As $|E'|\leq k$ then $V(G) \setminus (C(s)\cup C(t)) \leq k$. So $|C(s)|+|C(t)|\geq n-k$.
\end{proof}

Now we brute-force all vertices as candidates for a role of vertex $s$, i.e. vertex of degree $1$ 
in solution $E'$. Consider two possibilities either $|C(s)| > 2n - 3.14k$ or $|C(s)| \leq 2n - 3.14k$.

Consider the first case, if  $|C(s)| > 2n - 3.14k$, then we find solution in 
$\cOs(2^{n - \frac{|C(s)|}{2}}) = \cOs(3^{k})$ time. In order to do this we consider $deg_G(s)$ cases. Each case correspond to a different edge $st$ incident with $s$.  Such an edge we treat as the only edge incident with $s$ from $E'$. Having fixed an edge $st$ being from $E'$ we know that all other edges incident with $s$ belong to $E(C(s))$.
Denote the set of endpoints of these edges to be $U$. So we can identify at least 
$\frac{|C(s)|}{2}$ vertices from $C(s)$. 
Now we can apply the same technique as in proof of Theorem~\ref{exact:hcd}.

We define three functions $f, g, h$ over subsets of $W = V \setminus U$. 
\begin{itemize}
\item $f(S) = |E(S, W \setminus S)|$ if $G[S]$ is highly connected, otherwise it is equal to $\infty$. 
\item $h(S) = \min\limits_{i}(f^{\ast i}(S))$.
\item $g(S) = 2|E(W \setminus S, U)| + |E(S, W \setminus S)|$ if $G[U \cup S]$ is highly connected otherwise it is $\infty$.
\end{itemize}

Let us provide some intuition standing behind the formulas. Value $f(S)$ indicate number of vertices 
that we have to delete in order to separate highly connected graph $G[S]$. 
$h(S)$ is a number of edges needed to be deleted in order to separate $G[S]$ into highly connected components. 
$g(S)$ in some sense is a number of edge deletion needed to create a highly connected 
component $U\cup S$ which contains vertex $s$. We show that to solve the problem it is enough to compute 
$(g \ast h)(W)$. In similar way to Theorem~\ref{exact:hcd}  
$(g \ast h)(W)/2$ equals to a number of optimum edge deletions. 
Note that all deleted edges not having endpoints in $C(s)$ will be calculated two times, one for each of its incident  highly connected component, see definition of function $h$. Each edge of $E'$ having an endpoint in  $U$ is counted twice in first term of function $g$. And finally each edge from $E'$ having endpoint 
in $C(s)\setminus U$ is counted twice, once in second term of the formula of $g$, 
and once in the formula of $h$. So $(g \ast h)(W)/2$ is required number of edge deletions.

Second case, if $|C(s)| \leq 2n - 3.14k$ then
$n - k  \leq |C(s)| + |C(t)| \leq 2n - 3.14k + |C(t)| .$

It follows that $|C(t)| + 2n - 3.14k  \geq n - k$. Hence,  $C(t) \geq 2.14k - n \geq 0.14k$.
It means that in $C(t)$ there is a vertex of degree at most $7$ in graph $(V(G),E')$. 
We brute-force all candidates for such vertex and for such edges  from $E'$. 
Having fixed the candidates, vertex $t'$ and at most seven edges, we  identify  more than 
a half vertices from $C(t')=C(t)$ in the following way. All edges incident to $t'$ except just fixed set of candidates  
belong to $C(t)$. 
Denote the endpoints of these edges as $U_t$. In the same way, all edges incident with  $s$ except $st$ 
belong to $C(s)$. Denote by $U_s$ endpoints of edges incident with $s$ except 
the  edge $st \in E'$. 
Let $U = U_s \cup U_t$. Below we show how to solve obtained problem  
in $\cOs\left(2^{n - \frac{1}{2}{\left(|C(s)| + |C(t)|\right)}}\right)$ time. As in previous case we apply idea similar to 
algorithm from Theorem~\ref{exact:hcd}. Now we present only functions which convolution give an answer.
As the further details are identical to Theorem~\ref{exact:hcd}.

Our functions are defined over subsets of a set  $W = V \setminus U$.
\begin{itemize}
\item $f(S) = |E(S, W \setminus S)|$ if $G[S]$ is highly connected, otherwise $\infty$. 
\item $h(S) = \min\limits_{i}\left(f^{\ast i}(S)\right)$.
\item $g_s(S) = 2|E(S, U_t)| + |E(S, W \setminus S)|$ if $G[S \cup U_s]$ is highly connected, otherwise  $\infty$.
\item $g_t(S) = 2|E(S, U_s)| + |E(S, W \setminus S)|$ if $G[S \cup U_t]$ is highly connected, otherwise $\infty$.
\end{itemize}
 
The only difference from previous case is that we constructed  two functions  $g_s, g_t$ instead of just one function $g$
as now we know two halves of two guessed highly connected components.
Minimum number of edge deletions in \texttt{YES}-instance 
separating clusters $C(s), C(t)$ ($U_s\subseteq C(s), U_t\subseteq C(t)$) 
is  $(h \ast g_s \ast g_t) (W)/2$.  
So in this case we need $\cOs(2^{|W|})$  running time 
which is $\cOs\left(2^{n -  \frac{(n- k)}{2}}\right) = \cOs\left(2^{\frac{3k}{2}}\right)$. 

\end{proof}
Pseudo-code for algorithm from previous lemma is shown in Algorithm~\ref{alg:affected}.

\begin{algorithm}\caption{}\label{alg:affected}
	\begin{algorithmic}
		\Function{$AFFECTED$}{$(V, E), k$}
		\If{$|V| \leq 1.57k$}
		\State \Return $EXACT((V, E),k)$
		\EndIf
		\If{$|V|>2k$}                                       
		\State \Return NO
		\EndIf
		\For{$st \in E$}
		\State $U(s) = N[s] \setminus \{t\}$
		\If{$|U(s)| > n-1.57k$}
		\State Compute $f, h, g, g \ast h$ for all subsets of $V \setminus U(s)$
		\If{$(g \ast h)(V \setminus U(s)) \leq 2k$}
		\State \Return YES
		\EndIf
		\Else
		\For{$ 0 \leq l \leq 7, (t'y_1, \dotsc ,t'y_l) \in E^{l}$}
		\State $U(t') = N[t'] \setminus \{y_1, \dotsc, y_l\}$
		\State $U = U(s) \cup U(t)$                                                                                                                                   
		\If{$U(s) \cap U(t') = \emptyset \land |U| \geq \frac{n-k}{2} $ }
		\State Compute $f, h, g_s, g_t, h \ast g_s \ast g_t$ for  all subsets of $V \setminus U$
		\If{$(h {\ast} g_s \ast g_t)(V \setminus U) \leq 2(k - |E(U(s), U(t'))|$}
		\State \Return YES                                                                                                                                            
		\EndIf
		\EndIf
		\EndFor
		\EndIf
		\EndFor
		\State \Return NO
		\EndFunction
	\end{algorithmic}
\end{algorithm}

\subsection{\phcd}

\defproblemu{\phcd}
{
	Graph \(G = (V, E)\), integer numbers \(p\) and \(k\).
}
{
	Is there a subset of edges \(E' \subset E\) of size at most \(k\) such that \(G - E'\)
	contains at most \(p\) connected components and each component is highly connected?
}

Our algorithm for \phcd is insipired by algorithm for  \textsc{$p$-Cluster Editing} by Fomin et al.~\cite{FominKPPV14}.

First of all, we prove  an upper bound on the number of small cuts in highly connected graph.

\begin{lemma}\label{cuts}
	Let \(G=(V, E)\) be highly connected graph, 
	\(X = \argmin\limits_{\substack{S \subset V \\ \frac{|V|}{4} \leq |S| \leq 
			\frac{3|V|}{4}}}|E(S, V \setminus S)|\), and \(Y = V \setminus X\), then
	\begin{enumerate}[i)]
		\item 
		If \(|E(X, Y)| \geq \frac{|V|^2}{100}\) then for any partition of  \(V = A\sqcup B\) we have  \(|E(A,B)| \geq \frac{|A| \cdot |B|}{100}\,.\)
		\item 
		If \(|E(X, Y)| < \frac{|V|^2}{100}\) then for any partition of \(V = A\sqcup B\) we have: \\
		\( |E(A \cap X, B \cap X)| \geq  \frac{|X \cap A|\cdot|X \cap B|}{100},\)
		\(|E(A \cap Y, B\ \cap Y)|\geq \frac{|Y \cap A|\cdot|Y \cap B|}{100},\)\\
		\( |E(A, B)| \geq    \frac{|X \cap A|\cdot|X \cap B|}{100} + \frac{|Y \cap A|\cdot|Y \cap B|}{100}.\)
	\end{enumerate}
\end{lemma}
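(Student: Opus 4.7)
Set $n = |V|$ and write $X_A = A \cap X$, $X_B = B \cap X$, $Y_A = A \cap Y$, $Y_B = B \cap Y$. First I would observe that the third inequality of (ii) is an immediate consequence of the first two: edges counted in $E(X_A, X_B)$ have both endpoints in $X$, while those in $E(Y_A, Y_B)$ have both endpoints in $Y$, so the two sets are disjoint subsets of $E(A, B)$. The second inequality of (ii) follows from the first by the $X \leftrightarrow Y$ symmetry of the minimality assumption, so the task reduces to proving (i) and the first inequality of (ii).

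For (i), the plan is to split on whether $(A, B)$ is itself balanced. If $n/4 \leq |A| \leq 3n/4$, then by the choice of $X$ we have $|E(A, B)| \geq |E(X, Y)| \geq n^2/100 \geq |A||B|/100$, since $|A||B| \leq n^2/4$. Otherwise, WLOG $|A| < n/4$; the min-degree hypothesis $\deg(v) > n/2$ gives, for each $v \in A$, $|N(v) \cap B| > n/2 - (|A|-1) \geq n/4$, so summing yields $|E(A, B)| > |A| \cdot n/4$, which exceeds $|A||B|/100 \leq |A| \cdot n/100$.

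For the first inequality of (ii), my two key inputs would be a per-vertex degree count, giving $|E(X_A, X_B)| + |E(X_A, Y)| = |E(X_A, V \setminus X_A)| > |X_A|(n/2 - |X_A|)$, and a structural comparison derived from the minimality of $(X, Y)$: for any $S \subseteq X$ with $|X \setminus S| \geq n/4$, the cut $(X \setminus S, S \cup Y)$ is still balanced, so $|E(X \setminus S, S \cup Y)| \geq |E(X, Y)|$, which rearranges to $|E(S, X \setminus S)| \geq |E(S, Y)|$. Assume WLOG $|X_A| \leq |X_B|$. In the generic case $|X_A| \leq |X| - n/4$, I would apply this with $S = X_A$ to obtain $|E(X_A, X_B)| \geq |E(X_A, Y)|$; adding the degree count yields $|E(X_A, X_B)| > |X_A|(n/2 - |X_A|)/2$, and substituting $n/2 - |X_A| \geq |Y|/2 \geq n/8$ together with $|X_B| \leq 3n/4$ gives the required $|X_A||X_B|/100$ bound. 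The residual case $|X_A| > |X| - n/4$ forces $|X| < n/2$; however, the hypothesis $|E(X,Y)| < n^2/100$, together with $2|E(G[X])| > |X| n/2 - |E(X,Y)|$ and $|E(G[X])| \leq \binom{|X|}{2}$, already confines $|X|$ to the narrow window $|X| > (5+\sqrt{21}) n / 20 \approx 0.479 n$, so both $|X_A|$ and $|X_B|$ lie close to $n/4$. In this window the degree count alone, combined with $|E(X_A, Y)| < n^2/100$, gives $|E(X_A, X_B)| > |X_A|(n/2 - |X_A|) - n^2/100$, which comfortably exceeds $|X_A||X_B|/100 \leq n^2/1600$. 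The hardest part will be the bookkeeping across these subcases: one has to ensure that the min-degree bound in $G$, the move-to-$Y$ inequality, and the global upper bound on $|E(X,Y)|$ combine to yield the explicit constant $1/100$ uniformly in $|X_A|$, since none of the three bounds dominates the others throughout the whole parameter range.
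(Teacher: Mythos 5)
Your proposal is correct and follows essentially the same route as the paper: part (i) is handled identically, and for the first inequality of (ii) you use the same two ingredients --- minimality of the balanced cut (yielding $|E(S, X\setminus S)| \geq |E(S,Y)|$ for the appropriate piece $S$) and the degree bound $\deg(v) > |V|/2$ --- with the same case split on whether the larger piece of $X$ reaches $|V|/4$. The only divergence is in the residual case, where the paper simply applies the degree count to the \emph{larger} piece (automatically of size at least $|X|/2 \geq |V|/8$) and subtracts $|E(X,Y)| < |V|^2/100$, so your extra ``window'' argument $|X| > (5+\sqrt{21})|V|/20$ is correct but not needed.
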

\begin{proof}
	
	$i)$ Let \(V = A\sqcup B\). Without loss of generality  \(|A| < |B|\).
	
	If \( \frac{|V|}{4}\leq |A|\) then \(|E(X, Y)| \leq |E(A, B)|\). 
	Hence, \(|E(A, B)| \geq |E(X, Y)| \geq \frac{|V|^{2}}{100} \geq \frac{|A| \cdot |B|}{100}\,.\)
	
	If \(|A| < \frac{|V|}{4}\)  then \(|E(A, B)| \geq \sum\limits_{v \in A}(\deg(v) - |A|)\). 
	As \(\deg(v) > \frac{|V|}{2}\) for all $v \in V(G)$, we have  
	\(|E(A, B)| \geq |A|\left(\frac{|V|}{2} - |A|\right) \geq \frac{|A|\cdot|V|}{4} \geq 
	\frac{|A|\cdot|B|}{4} \geq \frac{|A|\cdot|B|}{100}\,.\)
	
	$ii)$ Note that \(|E(A, B)| \geq |E(A \cap X, B \cap X)| + |E(A \cap Y, B \cap Y)|\,.\) So it is enough to  prove that \(|E(A \cap X, B \cap X)| \geq \frac{|A \cap X|\cdot|B \cap X|}{50}\), as the proof of
	\(|E(A \cap Y, B \cap Y)| \geq \frac{|A \cap Y|\cdot|B \cap Y|}{50}\) is analogous. 
	The sum of these two inequalities gives the proof of the theorem.
	
	Without loss of generality \(|B \cap X| \leq |A \cap X|\). Hence, \(\frac{|V|}{8} \leq |A \cap X|\) and \(|B \cap X| \leq \frac{3|V|}{8}\,.\) Consider two cases: \(|A \cap X| \geq \frac{|V|}{4}\) and \(|A \cap X| < \frac{|V|}{4}\,.\)
	
	Consider case when \(|A \cap X| \geq \frac{|V|}{4}\).  At first we prove \(|E(A \cap X, B \cap X)| \geq |E(B \cap X, Y)|\). 
	It is known that:
	\begin{equation}\label{cut}
	|E(A \cap X, V \setminus \left( A \cap X\right))| = |E(X, Y)| - |E(B \cap X, Y)| + |E(A \cap X, B \cap X)|\,,
	\end{equation} 
	\(|A \cap X| \geq \frac{|V|}{4}\), and \(|V \setminus \left(A \cap X\right)| \geq |Y| \geq \frac{|V|}{4}\),
	it means \(|E(A \cap X, V \setminus \left(A \cap X\right))| \geq |E(X, Y)|\). 
	The last inequality  and~(\ref{cut}) imply
	\(|E(A \cap X, B \cap X)| \geq |E(B \cap X, Y)|\). 
	It follows that \( 2|E(A \cap X, B \cap X)| \geq   |E(B \cap X, A \cap X)| + |E(B \cap X, Y)| =|E(B \cap X, V \setminus \left(B \cap X\right)|  \).  
	
	As \( \frac{3|V|}{8} \geq |B \cap X| \)  and \(  |E(B \cap X, V \setminus \left(B \cap X\right) ) | \geq |B \cap X| \left(\frac{|V|}{2} - |B \cap X|\right)  \)  we have  \( |E(B \cap X, V \setminus \left(B \cap X\right))| \geq \frac{|B \cap X| \cdot |V|}{8}\). Hence, \( |E(A \cap X, B \cap X)| \geq \frac{|B \cap X| \cdot |V|}{16} \geq \frac{|B \cap X| \cdot |V|}{100} \).

	It is left to consider case \(|A \cap X| < \frac{|V|}{4}\). Note that 
	\(|E(A \cap X, B \cap X)| = |E(A \cap X, V \setminus (A \cap X))| - |E(A \cap X, Y)|\).
	As \(  \frac{|V|}{4} > |A \cap X|\) we have
	\(|E(A \cap X, V \setminus (A \cap X))| \geq 
	|A \cap X|\left(\frac{|V|}{2} - |A \cap X|\right) \geq \frac{|V|}{8}\cdot\frac{|V|}{4} \geq \frac{|V|^2}{32}\). 
	We know that \(|E(A \cap X, Y)| \leq |E(X, Y)| \leq \frac{|V|^2}{100}\), hence
	\(|E(A \cap X, B \cap X) \geq \frac{|V|^2}{32} - \frac{|V|^2}{100} > \frac{|V|^2}{50} \geq 
	\frac{|A \cap X| \cdot |B \cap X|}{100}\,.\)
\end{proof}

\begin{definition}
	A partition of \(V = V_1 \sqcup V_2\) is called a \(k\)-cut of \(G\) if \(|E(V_1, V_2)| \leq k\,.\)
\end{definition}

The following lemma limits number of $k$-cuts in a disjoint union of highly connected graphs.

\begin{lemma}\label{boundcut}
	If \(G = (V, E)\) is  a union of $p$ disjoint highly connected components and \(p \leq k\) then 
	the number of \(k\)-cuts in \(G\) is bounded by \(2^{\cO\left(\sqrt{pk}\right)}\,.\)
\end{lemma}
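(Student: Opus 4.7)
The plan is to first obtain a bound of the form $2^{\cO(\sqrt{j})}$ on the number of $j$-cuts in a single highly connected component, and then distribute the edge budget across the $p$ components via the constraint $\sum_{i=1}^{p} k_i \le k$, applying Cauchy--Schwarz together with Lemma~\ref{bound}.

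For the single-component estimate, fix a highly connected $H$ on $n$ vertices and let $X, Y$ be as in Lemma~\ref{cuts}. In case~(i) every cut $A \sqcup B$ satisfies $|E(A,B)| \ge |A|\,|B|/100$, so a $j$-cut forces (after WLOG $|A| \le |B|$) the bound $|A| \le 200j/n$; Lemma~\ref{bound} then gives $\binom{n}{200j/n} \le 2^{2\sqrt{200j}}$, and summing over the admissible $|A|$ contributes only a polynomial factor absorbed into $2^{\cO(\sqrt{j})}$. In case~(ii) a $j$-cut satisfies $|A \cap X|\,|B \cap X| + |A \cap Y|\,|B \cap Y| \le 100j$, and since $|X|, |Y| \ge n/4$ we may apply the same argument independently to the two halves, getting $\min(|A\cap X|, |B\cap X|) \le 200j/|X|$ and analogously for $Y$; the resulting product of counts is $2^{\cO(\sqrt{j})} \cdot 2^{\cO(\sqrt{j})} = 2^{\cO(\sqrt{j})}$. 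In degenerate regimes where the relevant set is smaller than $\cO(\sqrt{j})$ so that Lemma~\ref{bound} is vacuous, the trivial bound $2^{n} \le 2^{\cO(\sqrt{j})}$ takes over.

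For the full graph, every $k$-cut of $G$ restricts to a cut of each $H_i$ with some $k_i \ge 0$ edges and $\sum_i k_i \le k$, so the number of $k$-cuts is at most
\[
\sum_{\substack{(k_1, \dots, k_p) \\ k_i \ge 0,\ \sum k_i \le k}} \prod_{i=1}^{p} 2^{\cO(\sqrt{k_i})}.
\]
By Cauchy--Schwarz, $\sum_i \sqrt{k_i} \le \sqrt{p \sum_i k_i} \le \sqrt{pk}$, so each summand is at most $2^{\cO(\sqrt{pk})}$; the hypothesis $p \le k$ is what absorbs the contribution $2^{\cO(p)}$ coming from tuples with many vanishing $k_i$ into $2^{\cO(\sqrt{pk})}$. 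The number of tuples is $\binom{k+p}{p} \le 2^{2\sqrt{pk}}$ by Lemma~\ref{bound}, so multiplying gives the desired $2^{\cO(\sqrt{pk})}$ bound.

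The main obstacle I anticipate is the case~(ii) analysis: I must argue that the two halves $X$ and $Y$ behave like independent instances of case~(i) and carefully handle the degenerate regime where $n$, $|X|$, or $|Y|$ is of the same order as $\sqrt{j}$, so that the Lemma~\ref{bound} estimate must be replaced by the direct $2^{n}$ count without losing a factor that would upset the final $2^{\cO(\sqrt{pk})}$ bound.
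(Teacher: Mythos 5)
Your proof is correct, and it rests on the same two pillars as the paper's proof (the edge-density bound of Lemma~\ref{cuts} and the binomial estimate of Lemma~\ref{bound}, aggregated by Cauchy--Schwarz), but it organizes the union bound differently. The paper never isolates a single-component statement: it first refines each $C_i$ into at most two parts according to whether the minimum balanced cut is sparse, then fixes the intersection sizes $(x_i,y_i)$ of the cut with every refined part, bounds the cuts with a given size profile by $\prod_i\binom{x_i+y_i}{x_i}\leq 2^{2\sum_i\sqrt{x_iy_i}}\leq 2^{\cO(\sqrt{pk})}$, and finally counts the feasible size profiles via $\sum_i\min(x_i,y_i)\leq\sqrt{100qk}$ and an integer-partition/composition argument. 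You instead prove a standalone bound of $2^{\cO(\sqrt{j})}$ on the number of $j$-cuts of one highly connected graph (handling case~(ii) of Lemma~\ref{cuts} by splitting along $X\sqcup Y$, which plays exactly the role of the paper's refined partition, and correctly falling back to the trivial $2^{n}$ count when the relevant side has size $\cO(\sqrt{j})$), and then distribute the edge budget over the $p$ components, counting budget profiles by stars and bars, $\binom{k+p}{p}\leq 2^{2\sqrt{pk}}$. What your route buys is a cleaner, modular outer count (stars and bars instead of the paper's partition-number bookkeeping over size tuples); the price is that you must make the per-component bound uniform, i.e.\ of the form $C\cdot 2^{c\sqrt{j}}$ with absolute constants and free of any $\mathrm{poly}(n_i)$ factor (which holds, since the admissible small-side sizes number at most $200j/n_i+1=\cO(\sqrt{j})$ in the non-degenerate regime), and you must absorb the $C^{p}$ and the $j=0$ contributions via $p\leq k$, which you explicitly do. Both arguments yield the same $2^{\cO(\sqrt{pk})}$ bound.
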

\begin{proof}
	
	Let $G$ be a disjoint union of highly connected components \(C_1, \dotsc, C_p\). 
	For each $C_i$ we consider sets $X_i,Y_i$ where $E(X_i,Y_i)$
	is a minimum cut of $C_i$ and $C_i=X_i \sqcup Y_i$. We construct a new partition  \(C'_1, \dotsc, C'_q\) of $V(G)$.
	The new partition is obtained from partition \(C_1 \sqcup \dotsc \sqcup C_p\)  in the following way:
	if $|E(X_i,Y_i)| <  |C_i^2|/100$ then we split $C_i$ into two sets $X_i, Y_i$ otherwise we take $C_i$ 
	without splitting. Note that \(p \leq q \leq 2p\) as we either split $C_i$ into to parts or leave it as is.

	We bound number of $k$-cuts of graph $G$ in two steps. In first step we bound number of cuts $V_1,V_2$ 
	such that  \( |V_1 \cap C'_i|=x_i\) and 
	\(|V_2 \cap C_i'| = y_i\) where $x_i, y_i$ are some fixed integers. In second step we bound number of 
	tuples $(x_1,\dots,x_q, y_1, \dots, y_q)$ for which there is at least one 
	$k$-cut $V_1,V_2$ satisfying conditions  \( |V_1 \cap C'_i|=x_i\),  \( |V_2 \cap C'_i| = y_i\).


	If \(x_i, y_i\) are fixed and $x_i+y_i = |C'_i|$  the number of partitions of $C'_i$ is equal to \(\binom{x_i+y_i}{x_i}\).  Note that  by Lemma~\ref{bound} we have \(\binom{x_i + y_i}{x_i} \leq 2^{\sqrt{x_iy_i}}\).
	Observe that there are at least \(\frac{x_iy_i}{100}\) edges 
	between \(V_1 \cap C'_i\) and \(V_2 \cap C'_i\) by Lemma~\ref{cuts}. 
	So if \(V_1 \sqcup V_2\) is partition of \(V\) then \(\sum\limits_{i = 1}^{q}x_iy_i \leq 100k\). 
	Applying  Cauchy–Schwarz
	inequality we infer that \(\sum\limits_{i=1}^{q}\sqrt{x_iy_i} \leq 
	\sqrt{q}\cdot\sqrt{\sum_{i=1}^{q}x_iy_i} \leq \sqrt{200pk}\). 
	Therefore, the number of considered cuts is at most
	\(
	\prod\limits_{i=1}^{q}\binom{x_i+y_i}{x_i} \leq 2^{2\sum_{i=1}^{q}\sqrt{x_iy_i}} \leq 2^{\sqrt{800pk}}.
	\)
	
	Now we show bound for a second step i.e. number of possible tuples $(x_1,\dots,x_q, y_1, \dots, y_q)$ generating at least one $k$-cut. Note that \(\min\{x_i, y_i\} \leq \sqrt{x_iy_i}\). Hence, 
	\(\sum\limits_{i=1}^{q}\min(x_i, y_i) \leq \sqrt{100qk}\,.\) Tuple $(x_1,\dots,x_q, y_1, \dots, y_q)$ can be generated in the following way: at first we choose which value is smaller $x_i$ or $y_i$. Then we express
	\(\sqrt{\left\lfloor 100qk\right\rfloor}\) as 
	a sum of \(q+1\) non-negative numbers: \(\min\{x_i, y_i\}\) for \(1 \leq i \leq q\) and the rest
	\(\sqrt{\left\lfloor 100qk\right\rfloor} - \sum\limits_{i=1}^{q}\min(x_i, y_i)\). 
	
	The number of choices in  the first step of generation is equal to \(2^{q} \leq 2^{\sqrt{2qk}}\), and number of ways to expreess $\sqrt{100qk}$ as a sum of $q+1$ number is at most
	\(\binom{\sqrt{100qk}+q + 1}{q} \leq 2^{\sqrt{100qk} + q+1} \leq 2^{\sqrt{100qk} + \sqrt{2qk}+1}\). 
	Therefore, the total number of partitions is bounded by \(2^{c\sqrt{pk}}\) for some constant \(c\).
\end{proof} 

The last ingredient  for our algorithm is the following lemma proved by Fomin et al.\cite{FominKPPV14}

\begin{lemma}\cite{FominKPPV14}\label{enumerate:cuts}
	All cuts  \((V_1, V_2)\) such that \(|E(V_1, V_2)| \leq k\) of a graph \(G\) can be enumerated with polynomial time delay.
\end{lemma}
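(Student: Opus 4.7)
The plan is to apply the standard enumeration-with-polynomial-delay template of Johnson, Papadimitriou and Yannakakis (binary partitioning), using a minimum $s$-$t$ cut subroutine as the extension oracle. At every stage we maintain a partial assignment that designates a subset $L \subseteq V$ of vertices committed to $V_1$ and a subset $R \subseteq V$ committed to $V_2$; the remaining vertices are free. The initial state is $L = \{v_1\}$ and $R = \emptyset$ for an arbitrary fixed vertex $v_1$, and we later re-run the whole procedure with the roles of $V_1$ and $V_2$ swapped to pick up cuts where $v_1$ lies on the other side.

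The key subroutine is an oracle that, given $(L,R)$, decides whether any $k$-cut $(V_1,V_2)$ with $L \subseteq V_1$ and $R \subseteq V_2$ exists. This is done by contracting $L$ into a supersource $s$ and $R$ into a supersink $t$ (if $R = \emptyset$, try every single vertex outside $L$ as the sink) and computing a minimum $s$-$t$ cut by max-flow in polynomial time; the answer is yes iff this minimum is at most $k$. Correctness is immediate because any $k$-cut respecting $(L,R)$ induces an $s$-$t$ cut in the contracted graph of the same size, and conversely.

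The enumeration itself walks a binary search tree. At a node with partial state $(L,R)$, if $L \cup R = V$ we output $(L,R)$. Otherwise we pick any free vertex $u$ and recurse into the two children $(L \cup \{u\}, R)$ and $(L, R \cup \{u\})$, entering each child only after checking with the oracle that at least one extending $k$-cut exists. Because every explored subtree contains at least one leaf, the algorithm never idles, and along any root-to-leaf path the depth is at most $n$ with polynomial work per node, giving polynomial delay between consecutive outputs.

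The main subtlety I expect is twofold: first, making sure the enumeration produces each cut exactly once, which is handled automatically because the branching fixes an explicit side for each vertex; second, dealing with the trivial partitions $V_1 = \emptyset$ or $V_2 = \emptyset$, which is addressed by insisting $v_1 \in V_1$ at the root and restarting with the roles swapped, together with a post-filter that discards any output whose other side is empty. Otherwise the argument is routine, and the resulting procedure enumerates all $k$-cuts with delay $\mathrm{poly}(n,m)$.
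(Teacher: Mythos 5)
Your argument is correct and is essentially the standard proof: the paper itself does not prove this lemma but imports it from Fomin et al.~\cite{FominKPPV14}, where the enumeration is done by exactly this kind of backtracking (``flashlight'') search that fixes vertices to one side or the other and prunes branches using a minimum $s$-$t$ cut (max-flow) feasibility oracle, giving polynomial delay since every explored node leads to an output leaf at depth at most $n$. The only cosmetic caveat is that the paper later uses the trivial cuts $(V,\emptyset)$ and $(\emptyset,V)$ as well, so your post-filter discarding empty sides should be dropped (or those two cuts output separately), which changes nothing in the argument.
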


Now we are ready to present a final theorem.

\begin{theorem}
	There is a \(\cOs(2^{\cO(\sqrt{pk})})\) time algorithm for \phcd problem.
\end{theorem}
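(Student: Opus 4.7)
The plan is to combine Lemma~\ref{boundcut} and Lemma~\ref{enumerate:cuts} in a manner analogous to Fomin et al.'s algorithm for $p$-Cluster Editing. First I would establish a structural observation: if $V = C_1 \sqcup \dots \sqcup C_p$ is any valid solution with deletion set $E'$, $|E'| \leq k$, then for every $I \subseteq \{1,\dots,p\}$ the bipartition $\bigl(\bigsqcup_{i \in I} C_i,\; V \setminus \bigsqcup_{i \in I} C_i\bigr)$ is a $k$-cut of $G$, because its crossing edges form a subset of $E'$. In particular, each individual cluster and each union of clusters is a $k$-cut of $G$. Moreover, since $E(G-E') \subseteq E(G)$, every $k$-cut of $G$ is also a $k$-cut of $G - E'$; applying Lemma~\ref{boundcut} to $G - E'$ bounds the number of $k$-cuts of $G$ by $2^{\cO(\sqrt{pk})}$.

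Next I would invoke Lemma~\ref{enumerate:cuts} to enumerate all $k$-cuts of $G$ with polynomial delay, aborting once $2^{c\sqrt{pk}}$ cuts have been produced (for the constant $c$ coming from Lemma~\ref{boundcut}); an overflow safely certifies a NO-instance. Let $\mathcal A$ be the collection of subsets of $V$ appearing as one side of an enumerated cut, augmented by $\emptyset$ and $V$, and let $\mathcal C = \{A \in \mathcal A : G[A] \text{ is highly connected}\}$ be the candidate clusters. By the structural observation, in any optimal solution every cluster lies in $\mathcal C$ and every partial union of clusters lies in $\mathcal A$.

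The final step is a dynamic program with states in $\mathcal A$. Define $f(X, j)$ as the minimum value of $\sum_{1 \leq i < i' \leq j} |E_G(C_i, C_{i'})|$ over all partitions $X = C_1 \sqcup \dots \sqcup C_j$ with $C_i \in \mathcal C$; set $f(\emptyset, 0) = 0$ and use the recurrence
\[
f(X, j) \;=\; \min_{\substack{C \in \mathcal C,\; C \subseteq X \\ X \setminus C \in \mathcal A}} \bigl[f(X \setminus C,\, j-1) + |E_G(C, X \setminus C)|\bigr].
\]
The algorithm accepts iff $\min_{j \leq p} f(V, j) \leq k$. Correctness follows because any optimal partition induces a chain $\emptyset \subsetneq C_1 \subsetneq C_1 \sqcup C_2 \subsetneq \dots \subsetneq V$ of subsets all lying in $\mathcal A$, so restricting the DP states to $\mathcal A$ preserves the optimum. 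The running time is dominated by $|\mathcal A| \cdot p \cdot |\mathcal C| \cdot \mathrm{poly}(n) = 2^{\cO(\sqrt{pk})}$.

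The main obstacle I expect concerns the regime $p > k$, where Lemma~\ref{boundcut} does not directly apply. I would dispatch it with a polynomial preprocessing step: every cluster that contributes no incident inter-cluster edge in the optimum must be a highly-connected connected component of $G$, and all such components can be identified and peeled off with $p$ decremented accordingly. Since $\sum_i |E(C_i, V\setminus C_i)| = 2|E'| \leq 2k$, at most $2k$ of the remaining clusters can have positive inter-cluster degree, so after peeling $p \leq 2k$, and a reinspection of the proof of Lemma~\ref{boundcut} shows its bound $2^{\cO(\sqrt{pk})}$ survives throughout this regime (only the hidden constants change).
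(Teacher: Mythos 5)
Your proposal is correct and follows essentially the same route as the paper: enumerate all $k$-cuts with polynomial delay, bound their number by $2^{\cO(\sqrt{pk})}$ via Lemma~\ref{boundcut} applied to $G-E'$ (aborting on overflow), and then assemble clusters one at a time over the cut-sides --- your dynamic program $f(X,j)$ is just the paper's reachability test in the auxiliary digraph over $\mathcal{N}\times\{0,\dots,p\}\times\{0,\dots,k\}$ in a different guise. Your peeling of highly connected components to cap $p$ at $2k$ is a reasonable substitute for the paper's separate treatment of disconnected graphs and its implicit assumption $p\leq k$ in Lemma~\ref{boundcut}, and it changes nothing essential.
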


\begin{proof}
	First of all we solve the problem in case of connected graph. Denote by \(\mathcal{N}\) set of  all \(k\)-cuts in graph $G$. All elements of set \(\mathcal{N}\) can be enumerated with a polynomial time delay. If \(G\) is a union of \(p\) clusters 
	plus some edges then the size of \(\mathcal{N}\) is bounded by \(2^{c\sqrt{pk}}\) by Lemma~\ref{boundcut} (as additional edges only decrease number of $k$-cuts).
	Thus, we enumerate \(\mathcal{N}\) in time \(\cOs(2^{\cO(\sqrt{pk})})\). 
	If we exceed the bound \(2^{c\sqrt{pk}}\) given by Lemma~\ref{boundcut} we know that we can 
	terminate our algorithm and return answer \texttt{NO}. So we may assume that  we enumerate the whole \(\mathcal{N}\) and it contains at most \(2^{c\sqrt{pk}}\) elements.  
	
	We construct a directed graph \(D\), whose vertices are elements of a set
	\(\mathcal{N}\times\{0, 1, \dotsc, p\}\times\{0, 1, \dotsc, k\}\), 
	note that \(|V(D)| = 2^{\cO(\sqrt{pk})}\).
	We add arcs going from \(((V_1, V_2), j, l)\) to \(((V'_1, V'_2), j + 1, l')\), 
	where \(V_1 \subset V'_1\), \(G[V'_1 \setminus V_1]\) is highly connected graph,
	\(j \in \{0, 1, \dotsc, p-1\}\), and \(l' = l + |E(V_1, V'_1 \setminus V_1)|\).
	The arcs can be constructed in \(2^{\cO(\sqrt{pk})}\) time. 
	We claim that the answer for an instance \((G, p, k)\) is equivalent to existence of path 
	from a vertex \(((V, \emptyset), 0, 0)\) to a vertex \(((\emptyset, V), p', k')\) for some $p'\leq p, k'\leq k$.
	
	In one direction, if there is a path from \(((\emptyset, V),0,0)\) to \(((V,\emptyset),p',k')\) 
	for some \(k' \leq k\) and \(p'\leq p\), then the consecutive sets \(V'_1 \setminus V_1\) along the path 
	form highly connected components. Moreover,  number of deleted edges from \(G\) is equal to last coordinate which is smaller than \(k\). 
	
	Let us prove the opposite direction. Let assume that we can delete at most $k$ edges and get a graph with
	highly connected components \(C_1, \dotsc, C_p\). Let us denote \(T_i = \cup_{j < i}V(C_i)\), 
	\(l_{i + 1} = l_{i} + |E(T_{i + 1} \setminus T_{i}, T_{i})|\) then 
	the vertices $((T_i, V \setminus T_i), i - 1, l_i)$ constitute desired path in graph \(D\). 
	
	Reachability in a graph can be tested in a linear time with respect to the number of vertices and arcs.
	To concude the algorithm we simply test the reachability in the graph \(D\).
	
	It is left co consider a case when $G$ is not connected. Let assume that \(G\) consist of \(q\) connected components \(C_1, \dotsc, C_q\) then  for each connected component \(C_i\) we find all \(p' \leq p\) and $k' \leq k$  such that \((C_i, p', k')\) is \texttt{YES}-instance. After this we construct auxiliary directed graph \(Q\) with a set of vertices 
	\(\{0, \dotsc, q\} \times \{0, \dotsc, p\} \times \{0, \dotsc, k\}\). 
	We add arcs going from 
	\((i, a, b)\) to \((i + 1, a + p', b + k')\) if \((C_i, p', k')\) is a \texttt{YES}-instance. 
	Using similar arguments as before it could be shown that reachability of vertex $(q,p',k')$ from vertex $(0,0,0)$ is equivalent to possibility delete $k'$ edges and get $p'$ highly connected components. 
\end{proof}

\section{Algorithms for finding a subgraph}

\subsection{\seed}

\defproblemu{\seed}{
Graph $G = (V, E)$, subset  $S \subseteq V$ and integer numbers  $a$ and $k$.
}{
Is there a subset of edges $E' \subseteq E$ of size at most $k$  such that $G - E'$ contains only isolated vertices and one highly connected component $C$ with $S \subseteq V(C)$ and $|V(C)| = |S| + a$.
}

H{\"u}ffner et al.~\cite{HuffnerKS15} constructed an algorithm with running time
$\cO(16^{k^{0.75}} + k^2nm)$ for \seed problem. We improve the result to \\ $\cOs\left(2^{\cO(\sqrt{k}\log{k})}\right)$ time algorithm.

\begin{theorem}\label{seed:main}
There is $\cOs(2^{\cO(\sqrt{k}\log{k})})$ time algorithm for \seed problem.
\end{theorem}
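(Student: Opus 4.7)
I plan to begin with a structural observation that drastically constrains the solution. I claim that any instance admits an optimum solution $E'$ with no edge both of whose endpoints lie in $C$: a graph is highly connected iff its minimum degree exceeds half its order, and edge removals only decrease degrees, so an ``internal'' deletion could never turn a non-highly-connected subgraph into a highly connected one. Hence $E'$ can be taken to be exactly the set of edges of $G$ incident to $X := V\setminus C$, so $|E(X)|+|\partial X|\le k$, whence $\sum_{v\in X}\deg_G(v)\le 2k$, and after discarding isolated vertices of $G$ we obtain $|X|\le 2k$.

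By Lemma~\ref{lemma-dist} the highly connected graph $G[C]$ has diameter at most $2$, so $C$ is contained in the set of vertices at distance at most $2$ from any fixed $s^{*}\in S$ in $G$. All vertices outside this ball are therefore forced into $X$; if the edges incident to them already exceed $k$ I return NO, otherwise I pre-include them into $X$ and decrement the budget. After this preprocessing, every remaining vertex of $V\setminus S$ is within distance $2$ of $s^{*}$.

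The main algorithmic step is to enumerate candidate sets $X\subseteq V\setminus S$ of the required size $n-|S|-a$ via Lemma~\ref{enumerate:cuts}, keeping only those with $|\partial X|+|E(X)|\le k$, $|V\setminus X|=|S|+a$, and $G[V\setminus X]$ highly connected; verification of a single candidate is polynomial. The crucial claim is that the number of surviving candidates is at most $2^{\cO(\sqrt{k}\log k)}$. My plan for this bound follows the spirit of Lemma~\ref{boundcut}: view $C=V\setminus X$ as the unique highly connected component, use Lemma~\ref{cuts} to show that any balanced $k$-cut of $G[C]$ has a small side, and separately account for the at most $2k$ vertices of $X$ by guessing their contribution to the partition and applying a Cauchy--Schwarz step identical in form to the one used in Lemma~\ref{boundcut}.

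The main obstacle is that Lemma~\ref{boundcut} does not apply directly, because $G$ itself is not a disjoint union of highly connected components---only $G[C]$ is. To handle this I plan to split each candidate $X$ into its ``touching'' part (vertices with a neighbor in $C$) and its ``non-touching'' part, bound the former using the $|\partial X|\le k$ budget and the latter using the $|E(X)|\le k$ budget, and combine the two contributions via the same Cauchy--Schwarz step as in Lemma~\ref{boundcut}. Together with polynomial-time verification of each enumerated candidate, this should yield the target running time of $\cOs(2^{\cO(\sqrt{k}\log k)})$.
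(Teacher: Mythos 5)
Your opening reduction (internal deletions are never needed, so the solution is exactly the set of edges incident to $X=V\setminus C$, and the non-isolated part of $X$ has at most $2k$ vertices) is sound, but the core of the argument has a genuine gap, in two places. First, the algorithmic step: Lemma~\ref{enumerate:cuts} enumerates \emph{all} cuts with at most $k$ crossing edges with polynomial delay, so its cost is proportional to the total number of $k$-cuts of $G$, not to the number of ``surviving'' candidates you keep afterwards. That total is not $2^{\cO(\sqrt{k}\log k)}$: take a large highly connected graph and attach $k$ pendant vertices to it (or, since you never bound $n$ --- your radius-$2$ preprocessing does not --- attach $n$ of them); every subset of at most $k$ pendants yields a $k$-cut, so the enumeration alone costs $2^{\Omega(k)}$, respectively $n^{\Omega(k)}$, even though only one candidate survives the filter. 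Filtering after enumeration cannot repair this; you would need to enumerate only the survivors, and that is exactly the piece that is missing.

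Second, the counting claim cannot be proved ``in the spirit of Lemma~\ref{boundcut}''. Lemma~\ref{boundcut} and Lemma~\ref{cuts} rest on the fact that every component of the graph being cut is highly connected, so that splitting a component with $x$ and $y$ vertices on the two sides costs $\Omega(xy)$ edges; in your setting only $G[C]$ has this property, and the up to $2k$ vertices of $X$ carry no expansion whatsoever (the pendant example again), so the touching/non-touching split plus a Cauchy--Schwarz step does not bound the number of cuts they generate. The bound on the number of survivors is in fact true, but the argument that proves it is degree-based, and it is the paper's: first kernelize via Theorem~\ref{seed:kernel} to at most $2k+\frac{4k}{a}\le 6k$ vertices, then distinguish two cases. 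If $a\le 2\sqrt{k}$, guess the $a$ vertices of $C\setminus S$ among the $\cO(k)$ kernel vertices, giving $\binom{6k}{a}=2^{\cO(\sqrt{k}\log k)}$ choices; if $a>2\sqrt{k}$, every vertex of the solution component has degree larger than $\sqrt{k}$, so after removing low-degree vertices (paying their degrees from the budget) at most $2\sqrt{k}$ vertices lie outside the component, and one guesses this excluded set, again $2^{\cO(\sqrt{k}\log k)}$ choices. Your proposal is missing both the kernel (or any bound on $n$) and this case distinction on $a$, and the cut-enumeration route it substitutes for them does not achieve the claimed running time.
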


We rely on the following theorem proved in \cite{HuffnerKS15}.

\begin{theorem}\label{seed:kernel}\cite{HuffnerKS15}
	Any instance of \seed problem can be  transformed in $\cO(k^2nm)$ time into equivalent 
	instance with at most  $2k+\frac{4k}{a}$ vertices and at most ${2k \choose 2} + k$ edges.
\end{theorem}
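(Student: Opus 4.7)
The plan is to formulate a single degree-based vertex-removal rule, apply it exhaustively, and then verify the desired bounds by a case split on the value of $|S|+a$.

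\textbf{Reduction Rule.} Any vertex $v \in V$ with $\deg_G(v) \le (|S|+a)/2$ cannot lie in the target cluster $C$, since every vertex in $V(C)$ must have degree strictly above $|V(C)|/2 = (|S|+a)/2$ already inside $C$. If such a $v$ lies in $S$, return \texttt{NO}. Otherwise $v$ is isolated in every valid solution, so all its incident edges belong to $E'$; delete $v$ and set $k \leftarrow k - \deg_G(v)$, reporting \texttt{NO} if $k$ becomes negative. Because each removal may decrease other vertices' degrees, re-apply the rule until no vertex qualifies; it terminates in at most $|V|$ rounds.

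With the rule exhausted, split on $|S|+a$. In the regime $|S|+a > 2k$ the threshold $(|S|+a)/2$ exceeds $k$, but every vertex in $V\setminus V(C)$ has all its edges in $E'$ and hence degree at most $k$ in $G$. So no surviving vertex can lie outside $V(C)$, forcing $V=V(C)$. It then suffices to verify in polynomial time that $|V|=|S|+a$, $S\subseteq V$, and $G$ is highly connected; output a fixed constant-size canonical \texttt{YES}- or \texttt{NO}-instance that trivially fits the kernel bounds.

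In the main regime $|S|+a\le 2k$ we have $|V(C)|\le 2k$ directly. For any \texttt{YES}-instance, $|E(G[C])|\le\binom{|V(C)|}{2}\le\binom{2k}{2}$ and $|E(G)\setminus E(G[C])|=|E'|\le k$, so $|E(G)|\le\binom{2k}{2}+k$; if the reduced graph violates this, return a trivial \texttt{NO}-instance, giving the claimed edge bound. For the vertex bound, the reduction guarantees $\deg_G(v)>(|S|+a)/2\ge a/2$ for every surviving $v$, while in any \texttt{YES}-instance $\sum_{v\in V\setminus V(C)}\deg_G(v)\le 2|E'|\le 2k$; combining these yields $|V\setminus V(C)|<4k/(|S|+a)\le 4k/a$, so $|V|\le(|S|+a)+4k/a\le 2k+4k/a$.

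The vertex and edge bounds follow cleanly from this degree-counting argument; the delicate part I would expect is matching the precise $\cO(k^2nm)$ running time. A naive iterative application of the rule only gives $\cO(n(n+m))$, so pinning down the stated complexity likely requires interleaving the degree test with more refined auxiliary checks (for instance, repeated connectivity or edge-budget tests), and articulating those would be the main obstacle in a careful write-up.
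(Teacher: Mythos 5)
The paper itself offers no proof of this statement: it is quoted directly from H\"uffner et al.~\cite{HuffnerKS15}, so there is no in-paper argument to compare yours against; judged on its own merits, your degree-based derivation is essentially sound. The core rule (a vertex of degree at most $(|S|+a)/2$ cannot lie in the cluster since cluster vertices need degree above $(|S|+a)/2$ already inside it, hence it is isolated in any solution, so delete it and decrease $k$ by its current degree, answering \texttt{NO} if it is in $S$) is safe, and your counting is correct: in the regime $|S|+a>2k$ the exhausted rule forces $V=V(C)$ in any \texttt{YES}-instance, reducing everything to a polynomial-time check, while for $|S|+a\le 2k$ every \texttt{YES}-instance has at most $\binom{2k}{2}+k$ edges and fewer than $(|S|+a)+4k/(|S|+a)\le 2k+4k/a$ vertices. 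Whether this coincides with the original proof of \cite{HuffnerKS15} is immaterial; it is a valid standalone justification of the stated bounds.

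Two small repairs are needed. First, your size bounds are statements about \texttt{YES}-instances only; to actually obtain a kernel you must add, for the vertex count, the same explicit step you already spelled out for the edge count: if the reduced instance has more than $2k+4k/a$ vertices, output a constant-size trivial \texttt{NO}-instance (this is legitimate precisely because the bound is a necessary condition for \texttt{YES}). Without that line a \texttt{NO}-instance in which all degrees exceed $(|S|+a)/2$ could remain arbitrarily large. Relatedly, $|E(G)\setminus E(G[V(C)])|$ is in general only a subset of $E'$ (edges of $E'$ may also lie inside $V(C)$), so the equality you wrote should be the inequality $\le |E'|$; the bound is unaffected. Second, your closing worry about the running time is misplaced: exhaustive application of the rule takes at most $n$ rounds of $\cO(n+m)$ work, and the theorem only asserts that \emph{some} $\cO(k^2nm)$-time transformation exists, so a simpler and faster procedure certainly satisfies it --- the $\cO(k^2nm)$ figure reflects the particular reduction rules used in \cite{HuffnerKS15} for the \seed problem, not a complexity you are obliged to match.
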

\begin{proof}[Proof of theorem~\ref{seed:main}]
	By Theorem~\ref{seed:kernel} we construct an equivalent instance with at most $2k+\frac{4k}{a}$
	vertices and at most  ${2k \choose 2} + k$ edges. We consider two cases  $a \leq 2\sqrt{k}$ and $a > 2\sqrt{k}$.
	
	\textbf{Case 1: $a \leq 2\sqrt{k}$.} 
	
	In order to solve the problem we simply brute-force over all possible candidates. 
	We consider all vertex subsets $V'$ of size at most $2 \sqrt{k}$ and  in each branch 
	check whether $S\cup V'$ is an answer. 
	It is easy to see that the algorithm is correct. Up to polynomial factor the running time of such algorithm 
	is equal to number of candidates $V'$. Hence, the running time is at most 
	$O^*\left({ 2k + \frac{4k}{a} \choose a}\right) \leq { 6k \choose a} \leq (6k)^{a} \leq 2^{\cO(\sqrt{k}\log{k})}$.

	\textbf{Case 2:  $a > 2\sqrt{k}$.} 
	
	Since $a > 2\sqrt{k}$ then the size of highly connected component from the solution is 
	at least  $2\sqrt{k}$. So, if  $deg(w) <\sqrt{k}$ then $w$ does not belong to the highly connected component 
	from solution. In this case  we  delete vertex $w$ and all its edges, decreasing parameter $k$ by $deg(w)$. 
	Hence, we can assume that degree of all vertices is at least $\sqrt{k}$. 
	However, in such case  at most $2\sqrt{k}$ vertices are not present in highly connected component of the solution. As otherwise we have to delete more than  $2 \sqrt{k} \cdot \sqrt{k}$ edges. So now, we simply brute-force all subsets of vertices $F$ 
	that are no part of a highly connected graph. In order to do this we have to consider at most  
	$\cOs\left(\sum_{i \leq 2\sqrt{k}}{ n \choose i}\right) = \cOs\left({6k \choose 2\sqrt{k}}\right) 
	= \cOs\left(2^{\cO(\sqrt{k}\log{k})}\right)$ cases.

	So the running time for \textbf{Case 2} match with the running time of case \textbf{Case 1}.  Hence, the running time of the whole algorithm is $\cOs(2^{\cO(\sqrt{k}\log{k})})$.
\end{proof}

\subsection{\hcs}

\defproblemu{\hcs}{Graph $G = (V, E)$, integer $k$, integer $s$.}{
Is there a set of vertices $S$ such that $|S|=s$, $G[S]$ is highly connected graph 
and $|E(S, V \setminus S)| \leq k$.
}


 H{\"u}ffner et al.~\cite{HuffnerKS15} proposed \( \cOs(4^k) \) algorithm for \hcs problem, in this work we construct subexponential algorithm for the same problem with running time \( \cOs(k^{\cO(k^{2/3})}) \).

In order to solve \hcs problem  H{\"u}ffner et al. in~\cite{HuffnerKS15} constructed algorithm for a more general problem: 

\defproblemu{\fhcs}{Graph $G = (V, E)$, integer $k$, integer $s$, function \( f : V \to \mathbb{N} \).}{
Is there  a set of vertices $S$ such that  $|S|=s$, $G[S]$ is highly connected and 
$|E(S, V \setminus S)|  + \sum\limits_{v \in S}f(v)\leq k$.}

Our algorithm uses reduction rules proposed in~\cite{HuffnerKS15}. 
Here, we state the reduction rules without proof, as the proofs can be found in \cite{HuffnerKS15}.

\begin{reduce}\label{hcs:smallcomp}
If $G$ contains connected component $C$ of size smaller than $s$ then delete $C$ i.e. solve instance
$(G \setminus C, f, k)$.
\end{reduce}

\begin{reduce}\label{hcs:bigcut}
Let $G$ contains connected component $C = (V', E')$ with minimal cut bigger than $k$. 
If $C$ is highly connected graph,  $|V'|=s$ and $ \sum\limits_{s \in V'}f(s) \leq k$ 
then output a trivial \texttt{YES}-instance otherwise remove $C$, 
i.e. consider instance  $(G \setminus C, f, k)$ of \fhcs problem.
\end{reduce}

\begin{reduce}\label{hcs:cut}
Let $G$ contains connected component  $C$ with minimal cut  $(A, B)$ of size at most $\frac{s}{2}$.
We define function $f'$ in the following way: for each vertex $v \in A$ 
$f'(v):= f(v) + |N(v) \cap B|$ and for each  $v \in B$ we let $f'(v) := f(v)  + |N(v) \cap A|$. 
Replace original instance with an instance $(G\setminus~{E(A, B)}, f', k)$.
\end{reduce}
\begin{lemma}\label{hcs:bound}
Rules \ref{hcs:smallcomp}, \ref{hcs:bigcut}, \ref{hcs:cut}  can be exhaustively applied in time $\cO((sn + k)m)$.  If rules \ref{hcs:smallcomp}, \ref{hcs:bigcut}, \ref{hcs:cut} are not applicable then $k > \frac{s}{2}$.
\end{lemma}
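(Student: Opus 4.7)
The lemma bundles two independent claims and I would handle them separately.

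For the structural claim ($k > s/2$ upon saturation), the plan is to take any nonempty connected component $C$ of the current graph after saturation and read off three inequalities: inapplicability of Rule~\ref{hcs:smallcomp} gives $|V(C)| \geq s$; inapplicability of Rule~\ref{hcs:cut} gives $\mathrm{mincut}(C) > s/2$; inapplicability of Rule~\ref{hcs:bigcut} gives $\mathrm{mincut}(C) \leq k$. Chaining the last two inequalities immediately yields $k > s/2$ (and if the current graph is empty the implication is vacuous). This is the easy part and takes just a few lines.

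For the running time, I would process one connected component at a time. First I would test Rule~\ref{hcs:smallcomp} by a BFS in $\cO(n+m)$. Otherwise I would compute the component's minimum edge cut truncated at threshold $k+1$, using a thresholded global min-cut routine such as Gabow's (runtime $\cO(\lambda m)$, with $\lambda$ the value returned, capped by the threshold). The returned value decides whether Rule~\ref{hcs:bigcut} or Rule~\ref{hcs:cut} fires.

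To bound the number of rule firings, I would let $t_i$ be the count of Rule $i$ and argue as follows: firings of Rule~\ref{hcs:smallcomp} or~\ref{hcs:bigcut} each remove a whole component so $t_1+t_2 \leq n$; firings of Rule~\ref{hcs:cut} strictly increase the number of components while the first two decrease it by one, so with the component count staying in $[0,n]$ one obtains $t_3 \leq t_1+t_2+n \leq 2n$. When Rule~\ref{hcs:cut} fires the cut produced has value at most $s/2$, so Gabow's call finishes in $\cO(sm)$ and the aggregate over the $\cO(n)$ firings is $\cO(snm)$. The remaining cost lies in the ``mincut$>k$'' certificates that trigger Rule~\ref{hcs:bigcut}: each costs $\cO(km)$ but is paid only on the component about to be deleted, and a charging argument over disjoint edge sets would aggregate them to $\cO(km)$. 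Summing produces the claimed $\cO((sn+k)m)$.

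The hardest part will be exactly that last aggregation: a naive per-component bound of $\cO(km)$ on the threshold-$k$ min-cut calls inflates to $\cO(nkm)$, and achieving the stated $\cO(km)$ aggregate relies crucially on the observation that a ``mincut$>k$'' certificate is only paid on a component that is immediately deleted by Rule~\ref{hcs:bigcut}, so the edges involved in each such certificate never participate in another threshold-$k$ flow, letting us charge the work to disjoint edge sets across the entire procedure.
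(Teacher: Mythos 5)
The paper does not actually prove Lemma~\ref{hcs:bound}: like Rules~\ref{hcs:smallcomp}--\ref{hcs:cut} themselves, it is imported from H\"uffner et al.~\cite{HuffnerKS15}, so there is no in-paper argument to compare your proposal against; I can only judge it on its own. Your structural half is correct and is exactly what the rule statements give: for any component $C$ of the reduced graph, inapplicability of Rule~\ref{hcs:cut} forces $\mathrm{mincut}(C) > \frac{s}{2}$ and inapplicability of Rule~\ref{hcs:bigcut} forces $\mathrm{mincut}(C) \leq k$, hence $k > \frac{s}{2}$ (Rule~\ref{hcs:smallcomp} is not even needed here). One nitpick: the empty-graph case is not ``vacuous'' --- if the reduced graph is empty the conclusion $k > \frac{s}{2}$ simply does not follow; the lemma has to be read as applying when the reduced graph is nonempty, the empty case being answered directly by the algorithm.

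For the running time, your amortization is plausible but has two loose ends. First, the thresholded global min-cut bound you attribute to Gabow is $\cO(\lambda m \log(n^2/m))$, not $\cO(\lambda m)$, so as written your accounting gives $\cO((sn+k)m\log n)$ rather than the stated $\cO((sn+k)m)$; matching the claimed bound needs either a different truncated min-cut routine or an explicit remark absorbing the logarithmic factor. Second, your charging covers only the calls that end with Rule~\ref{hcs:cut} firing (cost $\cO(sm)$ each, $\cO(n)$ of them) and the calls certifying $\mathrm{mincut} > k$ just before a Rule~\ref{hcs:bigcut} deletion; there is a third kind of call, on components with $\frac{s}{2} < \mathrm{mincut} \leq k$, where no rule fires at all. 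Each such call may cost up to $\cO(k\,m_C)$, but these components are never examined again and are pairwise edge-disjoint, so the same disjoint-edge-set charging you invoke for Rule~\ref{hcs:bigcut} bounds their total contribution by $\cO(km)$; this case must be stated, since otherwise a per-component bound naively multiplies by the $\cO(n)$ rounds. With these two repairs the sketch goes through and is a perfectly reasonable self-contained proof of the lemma that the paper leaves to the cited reference.
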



We also  use following  Fomin and Villanger's result.
\begin{proposition}\cite{FominV12}\label{conenum}  For each vertex  $v$ in graph $G$ and integers $b, f \geq 0$ 
number of connected induced subgraphs
$B \subseteq V(G)$ satisfying the following properties 
$v \in B$, $|B| = b + 1$, $|N(B)| = f$;
is at most ${b + f}\choose{b}$. Moreover, all these sets can be enumerated in time 
\(\cO\left({{b + f}\choose{b}}(n + m)b(b+f)\right).\,\)
\end{proposition}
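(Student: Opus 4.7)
My plan is to prove the proposition by a controlled branching enumeration in the spirit of the standard Fomin--Villanger argument. I maintain a current connected set $B \ni v$ together with a disjoint set $F$ of ``frozen'' vertices, which are declared to be outside $B$ but certified to lie in $N(B)$. Initially $B = \{v\}$ and $F = \emptyset$. As long as $|B| < b+1$ and $N(B) \setminus F$ is non-empty, I select deterministically (using a fixed ordering of $V$) some $u \in N(B) \setminus F$ and branch into two subproblems: (a) add $u$ to $B$; (b) add $u$ to $F$. Since $u \in N(B)$, branch (a) keeps $B$ connected, and because each vertex added to $F$ already lies in $N(B)$ at the moment of its inclusion while $B$ only grows, we have $F \subseteq N(B)$ throughout. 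I prune a branch as soon as $|F| > f$, since then $|N(B)| > f$ and no descendant can be valid.

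Next, I would bound the number of surviving leaves by counting decision sequences. Along any root-to-leaf path the total number of branching steps equals $(|B|-1) + |F|$ evaluated at the leaf; for a leaf to yield a valid output one needs $|B|=b+1$ and $|F| \leq f$, so the decision sequence is a binary string of length at most $b+f$ containing exactly $b$ symbols of type (a). The number of such sequences is $\binom{b+f}{b}$, which is the claimed combinatorial bound. Each recursive call can be implemented in $O(n+m)$ time to update $B$, $N(B)$, $F$ and to pick the next pivot $u$; since the depth of the tree is at most $b+f$, the total running time is $O\bigl(\binom{b+f}{b}(n+m)b(b+f)\bigr)$, matching the proposition.

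The main obstacle is completeness together with avoidance of double counting: I need every target $B^\star$ (with $v \in B^\star$, $|B^\star| = b+1$, $|N(B^\star)| = f$) to correspond to exactly one surviving leaf. Given the fixed ordering of $V$, the sequence of pivots $u$ along any run of the algorithm is determined by the current pair $(B, F)$, so the unique decision sequence that chooses (a) precisely on pivots belonging to $B^\star$ and (b) on the remaining ones traces out a single root-to-leaf path whose terminal set is exactly $B^\star$. This uniqueness argument is the one spot that really requires care; the counting argument and the per-node cost estimate are then routine, and combining them with the enumeration procedure yields both the cardinality bound and the time bound simultaneously.
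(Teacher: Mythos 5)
The paper itself does not prove this proposition --- it is imported from Fomin and Villanger --- so your proposal can only be measured against the standard argument, which it essentially reproduces: grow a connected set $B$ from $v$, keep a forbidden set $F\subseteq N(B)$, branch on a pivot $u\in N(B)\setminus F$ (add to $B$ or to $F$), prune when $|F|>f$, and your completeness/uniqueness argument (follow a target $B^{\star}$ by choosing option (a) exactly when the pivot lies in $B^{\star}$, which keeps $F\subseteq N(B^{\star})$ and guarantees a pivot exists while $B\subsetneq B^{\star}$, so each valid $B^{\star}$ is the terminal set of exactly one surviving leaf) is sound.

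The one step that does not hold as written is the leaf count. The number of binary strings of length at most $b+f$ containing exactly $b$ symbols of type (a) is $\sum_{j=0}^{f}\binom{b+j}{b}=\binom{b+f+1}{b+1}$, not $\binom{b+f}{b}$, so the counting claim you invoke is literally false and, taken at face value, only yields a bound larger by a factor of roughly $(b+f+1)/(b+1)$. The gap is repairable: the decision strings of the surviving leaves form a prefix-free family (no leaf string can be a proper prefix of another, since a leaf has no descendants), and padding each such string with symbols of type (b) up to length exactly $b+f$ is injective on a prefix-free family and produces strings with exactly $b$ symbols of type (a); hence there are at most $\binom{b+f}{b}$ surviving leaves. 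Alternatively, and closer to the original write-up, let $T(b,f)$ bound the number of leaves and use the branching recursion $T(b,f)\le T(b-1,f)+T(b,f-1)$ with the natural base cases, concluding $T(b,f)\le\binom{b+f}{b}$ by Pascal's identity. With either fix, your per-node cost $\cO(n+m)$ and depth at most $b+f$ give the stated enumeration time, and the rest of your argument stands.
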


Now we have all ingredients for out algorithm. 

\begin{theorem}\label{hcs:alg}
\fhcs can be solved in time   $2^{\cO(k^{{2/3}}\log{k})}$. 
\end{theorem}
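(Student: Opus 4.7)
The proof will proceed in two cases depending on the size of $s$. After applying Rules~\ref{hcs:smallcomp},~\ref{hcs:bigcut},~\ref{hcs:cut} exhaustively---which by Lemma~\ref{hcs:bound} takes polynomial time---I may assume $k>s/2$ and in particular $s<2k$.

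\textbf{Case 1: $s \leq k^{2/3}$.} I apply Proposition~\ref{conenum} directly: for each $v\in V$ and each $f\in\{0,1,\ldots,k\}$, enumerate every connected induced subgraph $B\ni v$ with $|B|=s$ and $|N(B)|=f$, and for each candidate I test in polynomial time whether $G[B]$ is highly connected and whether $|E(B, V\setminus B)|+\sum_{v\in B}f(v)\leq k$. Summing the Fomin--Villanger bounds over $f$ via the hockey-stick identity, the total number of candidates is at most $n\binom{s+k}{s}\leq n(e(s+k)/s)^s$. Since $s\leq k^{2/3}$, taking logarithms gives $s(1+\log((s+k)/s))=\cO(s\log k)=\cO(k^{2/3}\log k)$, so the running time is $2^{\cO(k^{2/3}\log k)}$.

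\textbf{Case 2: $s>k^{2/3}$.} Here the direct Fomin--Villanger enumeration of $s$-vertex subgraphs over all boundary sizes $f\leq k$ is too expensive, so I proceed in two phases. In the \emph{core phase}, for each vertex $v\in V$, I use Proposition~\ref{conenum} to enumerate every connected induced subgraph $C_0\ni v$ with $|C_0|=\lceil k^{2/3}\rceil$ and $|N(C_0)|\leq k$; the same binomial computation as in Case 1 bounds the number of cores by $2^{\cO(k^{2/3}\log k)}$. For each core I conjecture $C_0\subseteq S$ and try to extend it to a full solution by branching. In the \emph{extension phase} I maintain a committed set $C\supseteq C_0$ and a residual budget $k'$; at every step I pick $u\in V\setminus C$ maximising $|N(u)\cap C|$ and branch on two cases: (A) assume $u\in S$ and set $C:=C\cup\{u\}$, leaving $k'$ unchanged, or (B) assume $u\notin S$, decrement $k'$ by $|N(u)\cap C|$ (every such edge must lie in $E(S,V\setminus S)$) and discard $u$. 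The recursion stops when $|C|=s$ (in which case I verify $G[C]$ is highly connected and the cost balance) or when $k'$ drops below zero.

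The unusual feature of this branching---advertised in the abstract---is that branch (A) does not decrease the parameter $k'$, yet the subsequent (B)-decrements grow with $|C|$: once $|C|>s/2$, high-connectivity of $G[S]$ forces $|N(u)\cap C|>|C|-s/2$ for every $u\in S\setminus C$, and the greedy choice of $u$ inherits this lower bound. Splitting a root-to-leaf path into the part before $|C|$ reaches $s/2$ (only $\cO(s)$ possible (A)-steps, each (B)-step costing $\geq 1$) and the part afterwards (where each (B)-step costs $\geq |C|-s/2$), I bound the number of feasible interleavings by a product of binomial coefficients that Lemma~\ref{bound} collapses to $2^{\cO(k^{2/3}\log k)}$ leaves per core. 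Multiplying by the $2^{\cO(k^{2/3}\log k)}$ candidate cores gives the claimed total.

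The main obstacle is precisely this recursion-tree analysis: because branch (A) is ``free'' in the parameter $k'$, the subexponential bound relies crucially on showing that long sequences of (A)-steps must force subsequent (B)-steps to become expensive, using the lower bound on $|N(u)\cap C|$ activated once $|C|>s/2$. Getting the analysis to yield exactly the $k^{2/3}$ exponent requires the core size to be tuned to $\lceil k^{2/3}\rceil$, which simultaneously controls the core-phase enumeration and seeds the extension phase close enough to the $|C|>s/2$ regime where the greedy (B)-decrements are large.
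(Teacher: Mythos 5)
Your Case~1 matches the paper. In Case~2, however, the core-and-extend scheme has a genuine gap, and it is exactly where the difficulty of the theorem lies. First, a repairable slip: the true core $C_0\subseteq S$ need not satisfy $|N(C_0)|\leq k$, since $N(C_0)$ contains not only boundary vertices of $S$ (at most $k$ of them) but also up to $s-|C_0|<2k$ vertices of $S$ itself; you would have to enumerate with $|N(C_0)|\leq 3k$, which still gives $2^{\cO(k^{2/3}\log k)}$ cores. The fatal problem is the extension-phase analysis. Your forcing bound $|N(u)\cap C|>|C|-s/2$ is vacuous as long as $|C|\leq s/2$, and your core has size only $\lceil k^{2/3}\rceil$ while $s$ can be as large as $2k$ (after the reduction rules one only knows $k^{2/3}<s<2k$). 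So there is a long initial stretch of up to $s/2-k^{2/3}=\Theta(k)$ branching steps in which branch (A) is free and branch (B) may decrement $k'$ by only $1$. The number of root-to-leaf interleavings of $\Theta(k)$ free (A)-steps with up to $k$ unit-cost (B)-steps is $\binom{\Theta(k)+k}{k}$, and Lemma~\ref{bound} only bounds this by $2^{2\sqrt{ab}}$ with $a,b=\Theta(k)$, i.e.\ $2^{\Theta(k)}$ leaves per core --- exponential, not $2^{\cO(k^{2/3}\log k)}$. Nor can you patch this by enlarging the core past $s/2$: enumerating connected induced subgraphs of size $\Theta(s)=\Theta(k)$ with boundary $\cO(k)$ via Proposition~\ref{conenum} already costs $2^{\Theta(k)}$.

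The missing idea in the paper is how to obtain, cheaply, a seed that already contains \emph{more than $s/2$} vertices of $S$, so that the cheap phase never occurs. Since $\sum_{v\in S}|N(v)\setminus S|\leq k$ and $s>k^{2/3}$, some $v\in S$ has fewer than $k/s<k^{1/3}$ neighbours outside $S$; guessing $v$ and the set $N(v)\setminus S$ costs only $n\sum_{i\leq k^{1/3}}\binom{s+k^{1/3}}{i}=2^{\cO(k^{2/3})}$ cases, and then $N(v)\cap S$ is known exactly and has size greater than $s/2$ by high connectivity. Starting the branching from this $W$, every ``exclude'' branch pays at least $|W|-s/2\geq 1$, and these costs are non-decreasing as $W$ grows; the paper then bounds the tree not by interleaving binomials but by encoding each root-to-leaf path as a non-decreasing sequence $a_1\leq\dots\leq a_m$ with $\sum_i a_i\leq k$ plus one integer, i.e.\ by integer partitions of numbers up to $k$, which by Pribitkin's bound gives only $2^{\cO(\sqrt{k})}$ leaves per guess. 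Without both of these ingredients --- the low-outdegree vertex trick that seeds $W$ past $s/2$, and the partition-counting argument --- your claimed collapse to $2^{\cO(k^{2/3}\log k)}$ leaves per core does not follow, so the proposal as written does not establish the theorem.
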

\begin{proof}
First of all we exhaustively apply reduction rules \ref{hcs:smallcomp}, \ref{hcs:bigcut}, \ref{hcs:cut}. 
From Lemma~\ref{hcs:bound} follows that we may assume
$2k > s$. We consider two cases either $k^{{2/3}} < s$ or $k^{{2/3}} \geq s$.

\textbf{Case 1:}  $s \leq k^{{2/3}}$. Enumerate all induced connected subgraphs 
$G'=(V',E')$ such that $|V'|=s$ and $N(V')\leq k$. 
If desired $S$ exists than it is among enumerated sets. 
From Proposition~\ref{conenum} follows that number of such sets is at most  $nk\cOs({s + k \choose s})$.  
As  $s < 2k$ and 
\(
s < k^{2/3}$ we have $nk\cOs({s + k \choose s})\leq \cOs((s+k)^{s}) \leq \cOs(2^{k^{{2/3}}\log{k}})
\). 
Hence, in time $\cOs(2^{k^{{2/3}}\log{k}})$ we can enumerate all potential candidates $S'$.  
For each candidate we check in polynomial time  whether 
$G[S']$  is highly connected and $|E(S',V{\setminus}S')| + \sum\limits_{v \in S'}f(v) \leq k$.
 
\textbf{Case 2:} $k^{\frac{2}{3}} < s$. Let set  $S$ be a solution. Define edge set 
$E' = E(S, V \setminus S)$. Consider function $d : S \to \mathbb{N}$ 
where $d(v)=|N(v)\cap(V{\setminus}S)|$. 
As $\sum\limits_{v \in S}d(v) = |E(S,V{\setminus}S)| \leq k$ then there is a vertex $v \in S$ such 
that $d(v) \leq \frac{k}{s} < k^{\frac{1}{3}}$. Note that for such  $v$  we have 
 $|N(v)|=|N(v)\cap S|+|N(v)\setminus S|\leq s+k^{\frac{1}{3}} $. 
We branch on possible values of such vertex and a set of its neighbors that do not belong to $S$. 
In order to do this we have to  consider at most  
\(
  n\sum\limits_{i \leq k^{{1/3}}}{s + k^{{1/3}} \choose i} \leq 
  {n}k^{{1/3}}2^{2\sqrt{(s + k^{1/3}-i)i}} \leq {n}k^{{1/3}}2^{2\sqrt{3k^{{4/3}}}}
   = n2^{O(k^{2/3})} \, 
\)
  cases. Knowing vertex $v \in S$ and $N(v)\setminus S$ we  find $N(v) \cap S$. So we already identified at least  $\frac{s}{2}+1$ vertices from $S$, let denote this set by $W$. Now we start branching procedure that in right branch extend set $W$ into a solution set $S$. Branching procedure takes as an input tuple  $(G, k, s', W, B)$ where $W$ is a set of vertices determined to be in solution $S$, $B$ is a set of vertices determined  to be not in solution, $k$ number of allowed edge deletions, $s'=s-|W|$ number of vertices that is left to add. The procedure pick a vertex $w \notin W\cup B$ and consider two cases either $w\in S, w\notin B$ or $w \notin S, w \in B$. 
  The first call of the procedure is performed on tuple $(G, k-|E(W,N(v)\setminus W)|, s-|W|, W, \emptyset)$.

Consider arbitrary vertex $x\in V\setminus (W \cup B)$. 
If $x \in S$ then $|N(x)\cap S| \geq  \frac{s}{2}$. 
Hence, $|N(x) \cap W| \geq |N(x)\cap S|- |S\setminus W| \geq \frac{s}{2} - (s-|W|) =|W|-\frac{s}{2} $. 
So any vertex $x$ such that $|N(x)\cap W|<|W|-\frac{s}{2}$ cannot belong to solution $S$ and we safely put $x$ to $B$. 
Otherwise, we run our procedure  on tuples $(G, k - |N(x) \cap B|, s'- 1, W \cup x, B)$ and 
$(G, k - |N(x) \cap W|, s' , W , B \cup x)$. 
Note that we stop computation in a branch if $k'\leq 0$ or $s' = 0$.  
It is easy to see that the algorithm is correct.

It is left to determine the running time of the algorithm.
Note that procedure contains two parameters $k$ and $s'$. 
In one branch we decrease value of $s'$ by one in the other branch we decrease value of $k$ by $E(x,W)$. 
Note that in first branch we not only decrease value of $s'$ but we also increase a lower bound on 
$|N(x) \cap W|$ by $1$ as $|N(x) \cap W|\geq |W|-\frac{s}{2}$. 

Let us consider a path $(x_1, x_2, \dots x_l)$ from root to leaf in our branching tree. To each node 
we assign a vertex $x_i$ on which we are branching at this node. 
For each such path we construct unique sequence $a_1, a_2, \dots, a_m$ and a number  $b$. We put $b$ 
equal to the number of vertices from set $\{x_1, x_2, \dots, x_l \}$ that was assigned to solution $S$. 
And $a_i-1$ is a number of vertices that was assigned to $W$ in a sequence  $x_1, x_2, \dots x_j$ where $x_j$ is an $i-$th vertex assigned to $B$ in this sequence.
Note that $|N(x_j)\cap W|\geq a_i$, so $\sum_i a_i\leq k$. 
Note that for any path from root to leaf we can construct a corresponding sequence $a_i$ and number $b$. 
Moreover, any sequence $a_1, a_2, \dots a_m$ and number $b$ correspond to at most  one path  from root to node.  
  
\begin{proposition}\label{treeToSeq} 
Given number $b$ and non-decreasing sequence $a_1, a_2, \dots, a_m$ we can uniquely determine a corresponding path 
in a branching tree. 
\end{proposition}
\begin{proof}
For a notation convenience we let $a_0=1$. 
For $1\leq i \leq m$ we perform the following operation:
we make $a_i-a_{i-1}$  steps of assigning vertices to a solution set, 
i.e. to set $W$ and make one step in branch assigning vertex to a set $B$. 
After $m$ such iterations we perform $b-m$ steps of assigning vertices to solution.  
As $a_1, a_2, \dots a_m$ is non-decreasing sequence we have constructed a unique path in branching tree. 
It is easy to see that the original sequence $a_1, \dots , a_m$ and number $b$ 
correspond to a constructed path.  
So for each path from root to leaf there is a corresponding sequence and 
for each sequence with a number there is at most one corresponding path from root to node in a tree.
\end{proof}
  
\begin{lemma}\label{seqBound}
The number of tuples \((a_1, \dotsc, a_m, b)\) where 
\(0 \leq b \leq s\), \(1 \leq a_i \leq a_{i+1}\) for \(i < m\), and \(\sum_i a_i \leq k\) is bounded by \( \cOs\left(2^{\cO\left(\sqrt{k}\right)}\right) \)
\end{lemma}

\begin{proof}
For fixed \( l \), tuples \((a_1, \dotsc, a_m)\) such that \(\sum_i a_i = l\) are well-known and are called partitions of \(l\). 
Pribitkin~\cite{deAzevedoPribitkin2009} gave a simple upper bound
\(e^{2.57\sqrt{l}}\) on the number of partitions of \(l\). 
Hence, number of tuples \((a_1, \dotsc, a_m)\) is bounded by 
\(\sum\limits_{i = 0}^{k}e^{2.57\sqrt{i}} \leq (k + 1)e^{2.57\sqrt{k}}\).
Moreover, we know that \(0 \leq b \leq s\). It means that the number of tuples \((a_1, \dotsc, a_m, b)\) is bounded by \((s + 1)(k + 1)2^{\cO\left(\sqrt{k}\right)}\).
\end{proof}

From Proposition~\ref{treeToSeq} and Lemma~\ref{seqBound} follows that the number of nodes in 
a branching tree is at most $s2^{\cO\left(\sqrt{k}\right)}$. Hence, the running time of the procedure is 
at most $s 2^{\cO\left(\sqrt{k}\right)}$.
  
 Now, we compute required time for algorithm in this case(case 2). At first, we branch on a vertex and its neighbors from solution set $S$. We did it by  creating  at most $\cOs\left(2^{\cO\left(k^{2/3}\right)}\right)$ subcases. 
In each subcase we run a procedure with running time $\cOs\left(2^{\cO\left(\sqrt{k}\right)}\right)$. 
So, the overall runnning time equals to
$\cOs\left(2^{\cO\left(\sqrt{k}\right)}2^{\cO\left(k^{2/3}\right)}\right) = \cOs\left(2^{\cO\left(k^{\sfrac{2}{3}}\right)}\right)$.

The worst running time has \textbf{Case 1}, so the running time of the whole algorithms is \( \cOs\left(k^{\cO\left(k^{\sfrac{2}{3}}\right)}\right) \).
\end{proof}

\bibliography{hcc}

\begin{thebibliography}{10}

\bibitem{balasundaram2011clique}
Balabhaskar Balasundaram, Sergiy Butenko, and Illya~V Hicks.
\newblock Clique relaxations in social network analysis: The maximum k-plex
  problem.
\newblock {\em Operations Research}, 59(1):133--142, 2011.

\bibitem{BjorklundHKK07}
Andreas Bj{\"{o}}rklund, Thore Husfeldt, Petteri Kaski, and Mikko Koivisto.
\newblock Fourier meets m{\"{o}}bius: fast subset convolution.
\newblock In {\em Proceedings of the 39th Annual {ACM} Symposium on Theory of
  Computing, San Diego, California, USA, June 11-13, 2007}, pages 67--74, 2007.
\newblock URL: \url{http://doi.acm.org/10.1145/1250790.1250801}, \href
  {http://dx.doi.org/10.1145/1250790.1250801}
  {\path{doi:10.1145/1250790.1250801}}.

\bibitem{graphn2}
Gary Chartrand.
\newblock A graph-theoretic approach to a communications problem.
\newblock {\em SIAM Journal on Applied Mathematics}, 14(4):778--781, 1966.

\bibitem{deAzevedoPribitkin2009}
Wladimir de~Azevedo~Pribitkin.
\newblock Simple upper bounds for partition functions.
\newblock {\em The Ramanujan Journal}, 18(1):113--119, 2009.
\newblock URL: \url{http://dx.doi.org/10.1007/s11139-007-9022-z}, \href
  {http://dx.doi.org/10.1007/s11139-007-9022-z}
  {\path{doi:10.1007/s11139-007-9022-z}}.

\bibitem{FominKPPV14}
Fedor~V. Fomin, Stefan Kratsch, Marcin Pilipczuk, Michal Pilipczuk, and Yngve
  Villanger.
\newblock Tight bounds for parameterized complexity of cluster editing with a
  small number of clusters.
\newblock {\em J. Comput. Syst. Sci.}, 80(7):1430--1447, 2014.
\newblock URL: \url{http://dx.doi.org/10.1016/j.jcss.2014.04.015}, \href
  {http://dx.doi.org/10.1016/j.jcss.2014.04.015}
  {\path{doi:10.1016/j.jcss.2014.04.015}}.

\bibitem{FominV12}
Fedor~V. Fomin and Yngve Villanger.
\newblock Treewidth computation and extremal combinatorics.
\newblock {\em Combinatorica}, 32(3):289--308, 2012.
\newblock URL: \url{http://dx.doi.org/10.1007/s00493-012-2536-z}, \href
  {http://dx.doi.org/10.1007/s00493-012-2536-z}
  {\path{doi:10.1007/s00493-012-2536-z}}.

\bibitem{guo2011editing}
Jiong Guo, Iyad~A Kanj, Christian Komusiewicz, and Johannes Uhlmann.
\newblock Editing graphs into disjoint unions of dense clusters.
\newblock {\em Algorithmica}, 61(4):949--970, 2011.

\bibitem{hartuv2000algorithm}
Erez Hartuv, Armin~O Schmitt, J{\"o}rg Lange, Sebastian Meier-Ewert, Hans
  Lehrach, and Ron Shamir.
\newblock An algorithm for clustering cdna fingerprints.
\newblock {\em Genomics}, 66(3):249--256, 2000.

\bibitem{HartuvS00}
Erez Hartuv and Ron Shamir.
\newblock A clustering algorithm based on graph connectivity.
\newblock {\em Inf. Process. Lett.}, 76(4-6):175--181, 2000.
\newblock URL: \url{http://dx.doi.org/10.1016/S0020-0190(00)00142-3}, \href
  {http://dx.doi.org/10.1016/S0020-0190(00)00142-3}
  {\path{doi:10.1016/S0020-0190(00)00142-3}}.

\bibitem{hayes2013graphlet}
Wayne Hayes, Kai Sun, and Nata{\v{s}}a Pr{\v{z}}ulj.
\newblock Graphlet-based measures are suitable for biological network
  comparison.
\newblock {\em Bioinformatics}, 29(4):483--491, 2013.

\bibitem{HuffnerKLN14}
Falk H{\"{u}}ffner, Christian Komusiewicz, Adrian Liebtrau, and Rolf
  Niedermeier.
\newblock Partitioning biological networks into connected clusters with maximum
  edge coverage.
\newblock {\em {IEEE/ACM} Trans. Comput. Biology Bioinform.}, 11(3):455--467,
  2014.
\newblock URL: \url{http://dx.doi.org/10.1109/TCBB.2013.177}, \href
  {http://dx.doi.org/10.1109/TCBB.2013.177} {\path{doi:10.1109/TCBB.2013.177}}.

\bibitem{HuffnerKS15}
Falk H{\"{u}}ffner, Christian Komusiewicz, and Manuel Sorge.
\newblock Finding highly connected subgraphs.
\newblock In {\em {SOFSEM} 2015: Theory and Practice of Computer Science - 41st
  International Conference on Current Trends in Theory and Practice of Computer
  Science, Pec pod Sn{\v{e}}{\v{z}}kou, Czech Republic, January 24-29, 2015.
  Proceedings}, pages 254--265, 2015.
\newblock URL: \url{http://dx.doi.org/10.1007/978-3-662-46078-8_21}, \href
  {http://dx.doi.org/10.1007/978-3-662-46078-8_21}
  {\path{doi:10.1007/978-3-662-46078-8_21}}.

\bibitem{krause2005large}
Antje Krause, Jens Stoye, and Martin Vingron.
\newblock Large scale hierarchical clustering of protein sequences.
\newblock {\em BMC bioinformatics}, 6(1):15, 2005.

\bibitem{moser2009algorithms}
Hannes Moser, Rolf Niedermeier, and Manuel Sorge.
\newblock Algorithms and experiments for clique relaxations—finding maximum
  s-plexes.
\newblock In {\em International Symposium on Experimental Algorithms}, pages
  233--244. Springer, 2009.

\bibitem{parker2011new}
Brian~J Parker, Ida Moltke, Adam Roth, Stefan Washietl, Jiayu Wen, Manolis
  Kellis, Ronald Breaker, and Jakob~Skou Pedersen.
\newblock New families of human regulatory rna structures identified by
  comparative analysis of vertebrate genomes.
\newblock {\em Genome research}, 21(11):1929--1943, 2011.

\bibitem{pattillo2013maximum}
Jeffrey Pattillo, Alexander Veremyev, Sergiy Butenko, and Vladimir Boginski.
\newblock On the maximum quasi-clique problem.
\newblock {\em Discrete Applied Mathematics}, 161(1):244--257, 2013.

\bibitem{pattillo2013clique}
Jeffrey Pattillo, Nataly Youssef, and Sergiy Butenko.
\newblock On clique relaxation models in network analysis.
\newblock {\em European Journal of Operational Research}, 226(1):9--18, 2013.

\bibitem{PattilloYB13}
Jeffrey Pattillo, Nataly Youssef, and Sergiy Butenko.
\newblock On clique relaxation models in network analysis.
\newblock {\em European Journal of Operational Research}, 226(1):9--18, 2013.
\newblock URL: \url{http://dx.doi.org/10.1016/j.ejor.2012.10.021}, \href
  {http://dx.doi.org/10.1016/j.ejor.2012.10.021}
  {\path{doi:10.1016/j.ejor.2012.10.021}}.

\bibitem{schafer2009exact}
Alexander Sch{\"a}fer.
\newblock {\em Exact algorithms for s-club finding and related problems}.
\newblock PhD thesis, Friedrich-Schiller-University Jena, 2009.

\bibitem{shahinpour2013distance}
Shahram Shahinpour and Sergiy Butenko.
\newblock Distance-based clique relaxations in networks: s-clique and s-club.
\newblock In {\em Models, algorithms, and technologies for network analysis},
  pages 149--174. Springer, 2013.

\bibitem{yu2006predicting}
Haiyuan Yu, Alberto Paccanaro, Valery Trifonov, and Mark Gerstein.
\newblock Predicting interactions in protein networks by completing defective
  cliques.
\newblock {\em Bioinformatics}, 22(7):823--829, 2006.

\end{thebibliography}

\end{document}